\newtheorem{theorem}{Theorem}[section]
\newtheorem{lemma}[theorem]{Lemma}
\newtheorem{definition}[theorem]{Definition}
\newtheorem{assumption}[theorem]{Assumption}
\newtheorem{remark}[theorem]{Remark}
\newcommand{\argmin}{\textrm{arg}\min}
\newcommand{\Adyn}{A_{\theta}}
\newcommand{\Bdyn}{B_{\theta}}
\newcommand{\Cdyn}{C_{\theta}}
\newcommand{\Ddyn}{D_{\theta}}
\newcommand{\sstf}{ \textrm{SStF-}\textrm{M}_{k}}
\newcommand{\ssf}{ \textrm{SSF-}\textrm{M}_{k}}
\newcommand{\sstft}{ \textrm{SStF-}\textrm{M}_{2}}
\newcommand{\ssft}{ \textrm{SSF-}\textrm{M}_{2}}
\newcommand{\Real}{\mathbb{R}}
\newcommand{\Natural}{\mathbb{N}}
\newcommand{\Expectation}[1]{\ensuremath{\mathds{E}[#1]}}
\newcommand{\infgen}[1]{\mathcal{L}#1}
\newcommand{\define}{\coloneqq}
\newcommand{\Falgebra}{\ensuremath{\mathcal{F}}}
\newcommand{\Pprob}{\ensuremath{\mathds{P}}}
\newcommand{\ExtInputMap}{\ensuremath{\mathcal{U}}}
\newcommand{\IntInputMap}{\ensuremath{\mathcal{W}}}
\newcommand{\horizontalEqSep}{\noindent\rule{18cm}{0.4pt}
}
\newcommand{\diff}{\mathsf{d}}
\newcommand{\defeq}{\vcentcolon=}
\newcommand{\norm}[1]{\lVert#1\rVert}
\newcommand{\kinf}{$\mathcal{K}_{\infty}$}
\begin{document}



\begin{abstract}
	In this work, we derive conditions under which compositional abstractions of networks of stochastic hybrid systems can be constructed using the interconnection topology and joint dissipativity-type properties of subsystems and their abstractions. In the proposed framework, the abstraction, itself a stochastic hybrid system (possibly with a lower dimension), can be used as a substitute of the original system in the controller design process. Moreover, we derive conditions for the construction of abstractions for a  class of stochastic hybrid systems involving nonlinearities satisfying an incremental quadratic inequality. In this work, unlike existing results, the stochastic noises and jumps in the concrete subsystem and its abstraction need not to be the same. We provide examples with numerical simulations to illustrate the effectiveness of the proposed dissipativity-type compositional reasoning for interconnected stochastic hybrid systems.
\end{abstract}
%

\title[From Dissipativity Theory to Compositional Abstractions of Interconnected Stochastic Hybrid Systems]{From Dissipativity Theory to Compositional Abstractions of Interconnected Stochastic Hybrid Systems}

\author{Asad Ullah Awan}
\author{Majid Zamani}
\address{Department of Electrical and Computer Engineering, Technical University of Munich, D-80290 Munich, Germany.}
\email{asad.awan@tum.de, zamani@tum.de}

\maketitle

\section{Introduction}

Abstraction based control synthesis is becoming a promising approach to design controllers for enforcing complex specifications over large interconnected control systems in a reliable and cost effective way. 
In this approach, one synthesizes a controller to enforce the complex specifications over the abstraction instead of the original (concrete) system, and refines the controller (using a so-called {\it interface map}) to that of  the concrete system. Since the error between the output of the concrete system and that of its abstraction is quantified, one can ensure that the concrete system also satisfies the specifications (within a priori known error bounds).

Constructing abstractions for a complex system when viewed monolithically is a challenging task in itself. One approach to deal with this is to leverage the fact that many large-scale complex systems can be regarded as interconnected systems consisting of smaller {\it subsystems}. This motivates a {\it compositional} approach for the construction of the abstractions wherein abstractions of the concrete system can be constructed by using the abstractions of smaller subsystems.


Recently, there have been several results on the compositional construction of (in)finite abstractions of deterministic control systems including \cite{pola2016symbolic}, \cite{tazaki2008bisimilar}, \cite{7496809}, and of a class of stochastic hybrid systems \cite{zamani2015approximations}. These results employ a small-gain type condition for the compositional construction of abstractions. However, as shown in \cite{das2004some}, this type of condition is a function of the size of the network and can be violated as the number of subsystems grows. Recently in \cite{7857702}, a compositional framework for the construction of infinite abstractions of networks of control systems has been proposed using dissipativity theory. In this result a notion of storage function is proposed which describes joint dissipativity properties of control systems and their abstractions. This notion is used to derive compositional conditions under which a network of abstractions approximate a network of the concrete subsystems. Those conditions can be independent of the number of the subsystems under some properties for the interconnection topologies. 

In this work, we extend this approach to a class of stochastic hybrid systems, namely, jump-diffusions.  Stochastic hybrid systems are a general class of  systems consisting of continuous and discrete dynamics subject to probabilistic noise and events. In jump-diffusions, the continuous dynamics are modelled by stochastic differential equations and switches are modelled as Poisson processes. 
We introduce a notion of so-called stochastic storage functions describing joint dissipativity properties of stochastic hybrid subsystems and their abstractions. Given a network of stochastic hybrid subsystems and the stochastic storage functions between subsystems and their abstractions, we derive conditions based on the interconnection topology, guaranteeing that a network of abstractions quantitatively approximate the network of concrete subsystems. For a class of stochastic hybrid subsystems and using the incremental quadratic inequality of the nonlinearity, we derive a set of matrix (in)equalities facilitating the construction of their abstractions together with the corresponding stochastic storage functions. We illustrate the effectiveness of the proposed results in two examples in which compositionality conditions are satisfied independent of the number or gains of the subsystems.

\subsection{Related work}
Compositional abstraction for (deterministic) interconnected control systems using dissipativity was introduced in \cite{7857702}. In a preliminary version of this paper, which appeared in \cite{awanIFAC17}, this technique was extended to a class of stochastic hybrid systems. In both works, the joint dissipativity properties are defined with respect to a static map whose input is the (internal) inputs and outputs of the subsystems and their abstractions. In contrast to this, in this paper we employ a dynamic map based on a similar notion introduced in \cite{tippett2011dissipativity}. This allows for a broader class of (stochastic hybrid) subsystems for which one can find (stochastic) storage functions between them and their abstractions (cf. the second case study). Furthermore, in this work we derive constructive conditions for computing abstractions for a class of stochastic hybrid systems by considering nonlinearities which are more general than the ones considered in \cite{7857702} and \cite{awanIFAC17}. 

Compositional abstractions for jump-diffusions are also introduced in \cite{zamani2015approximations}. However, in \cite{zamani2015approximations} it is assumed that the stochastic noises in a subsystem and its abstraction are the same. This assumption is not realistic in practice, as it requires access to the realization of the noises in the original subsystem in order to refine the constructed controllers over the abstractions to the original subsystems. On the other hand, in this paper concrete subsystems and their abstractions do not share the same stochastic noises. In addition, the results in \cite{zamani2015approximations} use small-gain type conditions for the main compostionality result whereas the proposed approach here uses dissipativity-type conditions which can potentially provide scale-free results under some properties over the interconnection topologies. Although the results in \cite{zamani2015approximations} derive conditions for constructing abstractions for just linear jump-diffusions, here we provide constructive conditions for a class of nonlinear jump-diffusions.

 
\section{Stochastic Hybrid Systems}	
\subsection{Notation}
	The sets of non-negative integer and real numbers are denoted by $\Natural$ and $\Real$, respectively. Those symbols are footnoted with subscripts to restrict them in the usual way, e.g. $\Real_{>0}$ denotes the positive real numbers. The symbol $\Real^{n\times m}$ denotes the vector space of real matrices with $n$ rows and $m$ columns. The symbols $\vec{1}_{n}, \vec{0}_n, I_n, 0_{n \times m}$ denote the vector with all its elements to be one, the zero vector, identity and zero matrices in $\Real^n, \Real^n, \Real^{n \times n}, \Real^{n \times m}$, respectively. For $a,b \in \Real$ with $a \leq b$, the closed, open, and half-open intervals in $\Real$ are denoted by $[a,b]$, $]a,b[$, $[a,b[$, and $]a,b]$, respectively. For $a, b \in \Natural$ and $a \leq b$, we use $[a;b]$, $]a;b[$, $[a;b[$, and $]a;b]$ to denote the corresponding intervals in $\Natural$. Given $N \in \Natural_{\geq 1}$, vectors $x_i \in \Real^{n_i}, n_i \in \Natural_{\geq 1}$ and $i \in [1;N]$, we use $x = [x_1;\ldots;x_N]$ to denote the concatenated vector in $\Real^n$ with $n=\sum^N_{i=1} n_i$. Similarly, we use $X = [X_1;\ldots;X_N]$ to denote the matrix in $\Real^{n\times m}$ with $n = \sum^N_{i=1} n_i$, given $N \in \Natural_{\geq 1}$, matrices $X_i \in \Real^{n_i \times m}, n_i \in \Natural_{\geq 1}$, and $i \in [1;N]$. Given a vector $x \in \Real^{n}$, we denote by $\norm{x}$ the Euclidean norm of $x$. Given a matrix $M = \{m_{ij}\} \in \Real^{n\times m}$, we denote by $\norm{M}$ the induced 2 norm of $M$, and the trace of $M$ by $\mathsf{Tr}(M)$, where $\mathsf{Tr}(P) = \sum^n_{i=1}p_{ii}$ for any $P = \{p_{ij}\} \in \Real^{n\times n}$.
	 Given matrices $M_1,\dots,M_n$, the notation $\mathsf{diag}(M_1,\ldots,M_n)$ represents a block diagonal matrix with diagonal matrix entries $M_1,\ldots,M_n$. Given a symmetric matrix $A$, $\lambda_{\min}(A)$ and $\lambda_{\max}(A)$ denote the minimum and maximum eigenvalues of $A$, respectively.
 Given a function $f : \Real_{\geq 0} \rightarrow \Real^n$, the (essential) supremum of $f$ is denoted by $\norm{f}_{\infty} \coloneqq$  (ess)sup$\{\norm{f(t)}, \ t \geq 0\}$. Measurability throughout this paper refers to Borel measurability. A continuous function $\gamma: \Real_{\geq 0} \rightarrow \Real_{\geq 0}$, is said to belong to class $\mathcal{K}$ if it is strictly increasing and $\gamma(0) = 0$; $\gamma$ is said to belong to class $\mathcal{K}_{\infty}$ if $\gamma \in \mathcal{K}$ and $\gamma(r) \rightarrow \infty$ as $r \rightarrow \infty$. A continuous function $\beta: \Real_{\geq 0} \times \Real_{\geq 0} \rightarrow \Real_{\geq 0}$ is said to belong to class $\mathcal{KL}$ if, for each fixed $t$, the map $\beta(r,t)$ belongs to class $\mathcal{K}$ with respect to $r$, and for each fixed nonzero $r$, the map $\beta(r,t)$ is decreasing with respect to $t$ and $\beta(r,t) \rightarrow 0$ as $t \rightarrow \infty$.

\subsection{Stochastic hybrid systems}		
	Let $(\Omega, \Falgebra, \mathds{P})$ denote a probability space endowed with a filtration $\mathds{F} = (\mathcal{F}_s)_{s\geq0}$ satisfying the usual conditions of completeness and right continuity. The expected value of a measurable function $g(X)$, where $X$ is a random variable defined on a probability space ($\Omega, \Falgebra, \Pprob$), is defined by the Lebesgue integral $\Expectation{g(X)} \define \int_{\Omega} g(X(\omega))\diff\mathds{P}(\omega)$, where $\omega \in \Omega$.  Let $(W_s)_{s\geq0}$ be a $\mathsf{b}$-dimensional $\mathds{F}$-Brownian motion and $(P_s)_{s\geq 0}$ be an $\mathsf{r}$-dimensional $\mathds{F}$-Poisson process. We assume that the Poisson process and Brownian motion are independent of each other. The Poisson process $P_s = [P_s^1;\ldots;P_s^{\mathsf{r}} ]$ models $\mathsf{r}$ kinds of events whose occurrences are assumed to be independent of each other. 
\begin{definition} 
	\label{def:shs}
	The class of stochastic hybrid systems studied in this paper is a tuple  $$\Sigma = (\Real^n, \Real^m, \Real^p, \mathcal{U}, \mathcal{W}, f, \sigma, \rho,\Real^{q_1}, \Real^{q_2},h_1,h_2),$$ where
	\begin{itemize}
		
		\item $\Real^n$, $\Real^m$, $\Real^p$, $\Real^{q_1}$, and $\Real^{q_2}$ are the state, external input, internal input, external output, and internal output spaces, respectively;
		\item $\ExtInputMap$ and $\IntInputMap$ are subsets of sets of all $\mathds{F}$-progressively measurable processes taking values in $\Real^m$ and $\Real^p$, respectively;
		\item $f:\Real^n \times \Real^m \times \Real^p \rightarrow \Real^n $  is the drift term which is globally Lipschitz continuous: there exist Lipschitz constants $L_x, L_u, L_w \in \Real_{\geq 0}$ such that $\norm{f(x,u,w) - f(x',u',w')} \leq L_x\norm{x-x'} + L_u\norm{u - u'} + L_w\norm{w-w'}$ for all $x,x'  \in \Real^n$, all $u,u' \in \Real^m$, and all $w, w'\in \Real^p$; 
		\item $\sigma: \Real^n \rightarrow \Real^{n\times \mathsf{b}}$ is the diffusion term which is globally Lipschitz continuous with the Lipschitz constant $L_{\sigma}$;
		\item $\rho: \Real^n \rightarrow \Real^{n \times \mathsf{r}}$ is the reset term which is globally Lipschitz continuous with the Lipschitz constant $L_{\rho}$;
		\item $h_1: \Real^n \rightarrow \Real^{q_1}$ is the external output map;
		\item $h_2: \Real^n \rightarrow \Real^{q_2}$ is the internal output map.
	\end{itemize}
\end{definition}
A stochastic hybrid system $\Sigma$ satisfies
\begin{IEEEeqnarray}{c}	
\Sigma:\left\{ \begin{IEEEeqnarraybox}[\relax][c]{rCl}
\!\diff \xi(t) \!&=&\! f(\xi(t), \upsilon(t), \omega(t))\diff t \!+\! \sigma(\xi(t))\diff W_t \!+\!   \rho(\xi(t)) \diff P_t, \\
\!\zeta_1(t) \!&=&\! h_1(\xi(t)),\\
\!\zeta_2(t) \!&=&\! h_2(\xi(t)),
\end{IEEEeqnarraybox}\right.
\label{eq:sde_big}
\end{IEEEeqnarray}
\normalsize
$\Pprob$-almost surely ($\Pprob$-a.s.) for any $\upsilon \in \ExtInputMap$ and any $\omega \in \IntInputMap$, where stochastic process $\xi: \Omega \times \Real_{\geq 0} \rightarrow \Real^n$ is called a {\it solution process} of $\Sigma$, stochastic process $\zeta_1 : \Omega \times \Real_{\geq 0} \rightarrow \Real^{q_1}$ is called an external output trajectory of $\Sigma$, and  stochastic process $\zeta_2: \Omega \times \Real_{\geq 0} \rightarrow \Real^{q_2}$ is called an internal output trajectory of $\Sigma$. 
We also write $\xi_{a\upsilon\omega}(t)$ to denote the value of the solution process at time $t \in \Real_{\geq 0}$ under input trajectories $\upsilon$ and $\omega$ from initial condition $\xi_{a\upsilon\omega}(0) = a$ $\Pprob$-a.s., where $a$ is a random variable that is $\mathcal{F}_0$-measurable. We denote by $\zeta_{1_{a\upsilon\omega}}$ and $\zeta_{2_{a\upsilon\omega}}$ the external and internal output trajectories corresponding to solution process $\xi_{a\upsilon\omega}$.  Here, we assume that the Poisson processes $P^i_s$, for any $i \in [1;\mathsf{r}]$, have the rates $\lambda_i$. We emphasize that the postulated assumptions on $f, \sigma$, and $\rho$ ensure existence, uniqueness, and strong Markov property of the solution process \cite{oksendal2005applied}. 
	\begin{remark}
	If the stochastic hybrid system $\Sigma$ does not have internal inputs and outputs, the system defined in Definition \ref{def:shs} reduces to $\Sigma = (\Real^n, \Real^m, \mathcal{U}, f, \sigma, \rho, \Real^{q}, h)$, where $f:\Real^n \times \Real^m \rightarrow \Real^n$. Correspondingly, equation \eqref{eq:sde_big} describing the evolution of solution processes reduces to:
	\begin{IEEEeqnarray}{c}
	\label{eq:simple_sde}
	\Sigma: \left\{ \begin{IEEEeqnarraybox}[\relax][c]{rCl}
	\diff \xi(t) \!&=&\! f(\xi(t), \upsilon(t))\diff t \!+\! \sigma(\xi(t))\diff W_t \!+\!   \rho(\xi(t)) \diff P_t, \\
	\zeta(t) \!&=&\! h(\xi(t)) .
	\end{IEEEeqnarraybox}\right.
	\end{IEEEeqnarray}
	\end{remark} 
	We use the notion of stochastic hybrid system as in \eqref{eq:simple_sde} later to refer to interconnected systems.

\section{Stochastic Storage Function}

\label{sec:storage_static}
In this section, we introduce a notion of so-called stochastic storage functions, adapted from the notion of storage functions from dissipativity theory \cite{arcak2016networks}. Before introducing the notion of stochastic storage functions, we introduce a linear control system which is given by: 
\begin{align}
\label{eq:LCS}
\dot{\xi}_{\theta}(t) &= A_{\theta}\xi_{\theta}(t) + B_{\theta}\upsilon_{\theta}(t)\\
\zeta_{\theta}(t) &= C_{\theta}\xi_{\theta}(t) + D_{\theta} \upsilon_{\theta}(t),
\end{align}
	where $A_{\theta} \in \Real^{l_{\theta}\times l_{\theta} }, B_{\theta} \in \Real^{l_{\theta}\times m_{\theta}}, C_{\theta} \in \Real^{ q_{\theta}\times  l_{\theta}}$, and 
$D_{\theta} \in \Real^{q_{\theta} \times m_{\theta}}$, where $B_{\theta}$, and $D_{\theta}$ have the conformal partitions 
\begin{align}
\label{eq:Ddyn}
B_{\theta} = \begin{bmatrix} B_1 & B_2\end{bmatrix},~D_{\theta} = \begin{bmatrix} D_1 & D_2\end{bmatrix},
\end{align}
respectively. These conformal partitions will be used later in the paper. We use the tuple $\Sigma_{\theta} = (A_{\theta},B_{\theta}, C_{\theta}, D_{\theta})$ to represent such a linear control system.
Now we define the infinitesimal generator of a stochastic process which will be used later to define a notion of stochastic storage functions.
\begin{definition}
	Consider two stochastic hybrid systems $\Sigma = (\Real^n, \Real^m, \Real^p, \mathcal{U}, \mathcal{W}, f, \sigma, \rho, \Real^{q_1}, \Real^{q_2}, h_1,h_2)$ and $\hat{\Sigma} = (\Real^{\hat{n}}, \Real^{\hat{m}}, \Real^{\hat{p}}, \mathcal{\hat{U}}, \mathcal{\hat{W}}, \hat{f}, \hat{\sigma}, \hat{\rho}, \Real^{q_1}, \Real^{\hat{q}_2}, \hat{h}_1,\hat{h}_2)$ with solution processes $\xi$ and $\hat{\xi}$, respectively. Consider a linear control system $\Sigma_{\theta} = (A_{\theta},B_{\theta}, C_{\theta}, D_{\theta})$ satisfying \eqref{eq:LCS} with state trajectory $\xi_{\theta}$. Consider a twice continuously differentiable function $V:\Real^n \times \Real^{\hat{n}} \times \Real^{l_{\theta}} \rightarrow \Real_{\geq 0}$. The infinitesimal generator of the stochastic process $\Xi = [\xi;\hat{\xi};\xi_{\theta}]$, denoted by $\infgen$, acting on function $V$ is defined as \cite{oksendal2005applied}:
	\small
	\begin{align}
	\label{eq:infgen}
	\infgen{V}(x,\hat{x},\theta)  &\define 
	\begin{bmatrix} \partial_x V & \partial_{\hat{x}} V & \partial_{\theta}V\end{bmatrix} \begin{bmatrix} f(x,u,w) \\ \hat{f}(\hat{x},\hat{u},\hat{w}) \\
	A_{\theta}\theta + B_{\theta}u_{\theta} \end{bmatrix}    	\nonumber + \frac{1}{2}\mathsf{Tr}\left( \sigma(x)  \sigma^T(x)  \partial_{x,x}V \right)
 + \frac{1}{2}\mathsf{Tr}\left( \hat{\sigma}(\hat x)\hat{\sigma}^T(\hat{x}) \partial_{\hat{x},\hat{x}}V\right) \nonumber \\ &\quad + \sum_{j=1}^{\mathsf{r}} \lambda_j(V(x + \rho(x)\mathsf{e}_j^{\mathsf{r}}, \hat{x} ) - V(x,\hat{x}))   +
	\sum_{j=1}^{\hat{\mathsf{r}}}\hat \lambda_j(V(x,\hat x + \hat{\rho}(\hat{x})\mathsf{e}_j^{\hat{\mathsf{r}}}) - V(x,\hat x)),
	\end{align}
		\normalsize
	where $\mathsf{e}_j^{\mathsf{r}}$ denotes an $\mathsf{r}$-dimensional vector with 1 on the $j$-th entry and 0 elsewhere.
\end{definition}
Now we have all the ingredients to introduce a notion of stochastic storage functions.
\begin{definition}\label{d:stf}
	Consider two stochastic hybrid systems $\Sigma = (\Real^n, \Real^m, \Real^p, \mathcal{U}, \mathcal{W}, f, \sigma, \rho, \Real^{q_1}, \Real^{q_2}, h_1,h_2)$ and $\hat{\Sigma} = (\Real^{\hat{n}}, \Real^{\hat{m}}, \Real^{\hat{p}},  \mathcal{\hat{U}}, \mathcal{\hat{W}}, \hat{f}, \hat{\sigma}, \hat{\rho}, \Real^{q_1}, \Real^{\hat{q}_2}, \hat{h}_1,\hat{h}_2)$ with the same external output space dimension and let $\Sigma_{\theta} = 
	(A_{\theta}, B_{\theta}, C_{\theta}, D_{\theta})$ be a linear control system as in \eqref{eq:LCS}. 
 A twice continuously differentiable function $V:\Real^n \times \Real^{\hat{n}} \times \Real^{l_{\theta}} \rightarrow \Real_{\geq 0}$ is called a stochastic storage function from $\hat{\Sigma}$ to $\Sigma$, with respect to $\Sigma_{\theta}$, in the k-{\normalfont th} moment ($\sstf$), where $k \geq 1$, if it has polynomial growth rate and there exist  convex functions  $\alpha, \eta \in \mathcal{K}_{\infty}$, concave function $\psi_{\mathsf{\mathsf{ext}}} \in \mathcal{K}_{\infty} \cup \{0\} $, some constant $\mathsf{c} \in \Real_{\geq 0}$, some matrices $W, \hat{W},$ and $H$, and some symmetric matrix $X$ of appropriate dimension 
 such that
 	\begin{align}
 \label{ineq:DXnecessary}
 D_2^T X D_2 \preceq 0,
 \end{align} 
 where $D_2$ is given in \eqref{eq:Ddyn}, and $\forall x \in \Real^n$, $\forall\hat{x} \in \Real^{\hat{n}}$, and $\forall\theta \in \Real^{l_{\theta}}$ one has 
	\begin{align}
	\label{in:defV}
	\alpha(\norm{h_1(x) - \hat{h}_1(\hat{x})}^k) \leq V(x,\hat{x},\theta),
	\end{align}
	and $\forall \hat{u} \in \Real^{\hat{m}} \ \exists u \in \Real^m$, such that $\forall \hat{w} \in \Real^{\hat{p}} \ \forall w \in \Real^p$, one obtains
	\begin{align}
	\label{ineq:defDiss}
	 \infgen V(x,\hat{x},\theta) & \leq-\eta(V(x,\hat{x}, \theta)) + \psi_{\mathsf{\mathsf{ext}}}(\norm{\hat{u}}^k)     + z^T X 
	 z +\mathsf{c},
	\end{align}
	where $z = \Cdyn\theta + \Ddyn u_{\theta}$
	and 
	$$u_{\theta} = 
	\begin{bmatrix}
	Ww-\hat{W}\hat{w} \\
	h_2(x) - H\hat{h}_2(\hat{x})\end{bmatrix}.$$ 
	\end{definition}
	We use notation $\hat{\Sigma} \preceq \Sigma$ if there exists an $\sstf$ $V$ from $\hat{\Sigma}$ to $\Sigma$. The stochastic hybrid system $\hat{\Sigma}$ (possibly with $\hat{n} < n$) is called an abstraction of $\Sigma$. 
	\begin{remark}
		\label{rem:quadratic}
		If $C_{\theta}$ is the zero matrix, and $D_{\theta}$ is the identity matrix, then the quadratic term in \eqref{ineq:defDiss} reduces to the one in \cite{7857702, awanIFAC17}, with $$z = \begin{bmatrix} Ww - \hat{W}\hat{w} \\ h_2(x) - H\hat{h}_2(\hat{x})\end{bmatrix}.$$
	\end{remark}
	\begin{remark}
		\label{rem:supplyrate}
		Condition \eqref{ineq:DXnecessary} has also appeared in various forms in the literatures as a necessary condition for deriving asymptotic stability from dissipativity properties of a system. See for example \cite{tippett2011dissipativity}. 
	\end{remark}
		Now, we recall a slightly adapted version of the notion of stochastic simulation function introduced in \cite{zamani2015approximations}. This notion is appropriate for relating interconnected systems without internal inputs and outputs.
	\begin{definition}
		Let $\Sigma = (\Real^n, \Real^m, \mathcal{U}, f, \sigma, \rho, \Real^{q}, h)$ and $\hat{\Sigma} = (\Real^{\hat{n}}, \Real^{\hat{m}}, \mathcal{\hat{U}}, \hat{f}, \hat{\sigma}, \hat{\rho}, \Real^{q}, \hat{h}$) be two stochastic hybrid systems. A twice continuously differentiable function $V: \Real^n \times \Real^{\hat{n}} \times \Real^{l_{\theta}} \rightarrow \Real_{\geq 0}$ is called a stochastic simulation function from $\hat{\Sigma}$ to $\Sigma$ in the k-{\normalfont th} moment  ($\ssf$),  where $k \geq 1$, if there exist convex functions $\alpha, \eta \in \mathcal{K}_{\infty}$, concave function $\psi_{\mathsf{\mathsf{ext}}} \in \mathcal{K}_{\infty} \cup \{0\}$, and some constant $\mathsf c \in \Real_{\geq 0}$, such that $\forall x \in \Real^n$,  $\forall\hat{x} \in \Real^{\hat{n}}$, and $\forall\theta \in \Real^{l_{\theta}}$, one has
		\begin{align}
		\label{ineq:simfunction1}
		\alpha(\norm{h(x) - \hat{h}(\hat{x})}^k) \leq V(x,\hat{x},\theta),
		\end{align}
		and $\forall \hat{u} \in \Real^{\hat{m}} \ \exists u \in \Real^m$ such that
		\begin{align}
		\label{ineq:simfunction2}
		\infgen V(x,\hat{x},\theta) \leq -\eta(V(x,\hat{x},\theta)) + \psi_{\mathsf{\mathsf{ext}}}(\norm{\hat{u}}^k) + \mathsf c.
		\end{align}
	\end{definition}
	We say that a stochastic hybrid  system $\hat{\Sigma}$ is approximately simulated by a stochastic hybrid system $\Sigma$, denoted by $\hat{\Sigma} \preceq_{AS} \Sigma$, if there exists an $\ssf$ function $V$ from $\hat{\Sigma}$ to $\Sigma$. We call $\hat{\Sigma}$ (possibly with lower dimension $\hat{n} < n$) an abstraction of $\Sigma$.
	The next theorem shows the important of the existence of an $\ssf$ by quantifying the error between the output behaviors of $\Sigma$ and the ones of its abstractions $\hat{\Sigma}$. 
	\begin{theorem}\label{theorem1}
		Let $\Sigma = (\Real^n, \Real^m, \mathcal{U}, f, \sigma, \rho, \Real^{q},h)$ and $\hat{\Sigma} = (\Real^{\hat{n}}, \Real^{\hat{m}}, \mathcal{\hat{U}}, \hat{f}, \hat{\sigma}, \hat{\rho}, \Real^{q}, \hat{h})$ be two stochastic hybrid systems. Suppose $V$ is an $\ssf$ from $\hat{\Sigma}$ to $\Sigma$. Then, there exists a $\mathcal{KL}$ function $\beta$, a $\mathcal{K}_{\infty}$ function $\gamma_{\mathsf{\mathsf{ext}}}$, and some constant $\mathsf{c^{\prime}} \in \Real_{\geq 0}$ such that for any $\hat{\upsilon}\in\hat{\ExtInputMap}$,  any random variable $a$ and $\hat{a}$ that are $\mathcal{F}_0$-measurable, and any $\theta_0 \in \Real^{l_{\theta}} $, there exists $\upsilon \in \ExtInputMap$ such that the following inequality holds for any $t \in \Real_{\geq 0}$:
		\begin{align}
		\label{eq:bound_output}
		\Expectation{\norm{\zeta_{a\upsilon}(t) - \hat{\zeta}_{\hat{a}\hat{\upsilon}}(t)}^k} &\leq \beta(\Expectation{V(a,\hat{a},\theta_0)},t)+ \gamma_{\mathsf{\mathsf{ext}}}(\Expectation{\norm{\hat{\upsilon}}^k_{\infty}}) + \mathsf{c}^{\prime}.
		\end{align}
	\end{theorem}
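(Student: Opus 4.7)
The plan is to apply Dynkin's formula to $V$ composed with the joint process $\Xi=[\xi;\hat\xi;\xi_\theta]$, reduce the problem to a scalar differential inequality for $v(t)\define\Expectation{V(\xi(t),\hat\xi(t),\xi_\theta(t))}$, then invoke a standard ISS/comparison lemma, and finally translate back to the output error via \eqref{ineq:simfunction1} and Jensen's inequality.

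\emph{Step 1 (Input selection).} Fix $\hat\upsilon\in\hat{\ExtInputMap}$. The existential clause $\forall\hat u\,\exists u$ in \eqref{ineq:simfunction2} must be turned into a single $\mathds{F}$-progressively measurable process $\upsilon\in\ExtInputMap$. I would do this by applying a measurable selection theorem (Kuratowski--Ryll-Nardzewski) to the set-valued map
\begin{align}
(t,\omega)\mapsto\bigl\{u\in\Real^m:\text{the bound in \eqref{ineq:simfunction2} holds at }(\xi(t,\omega),\hat\xi(t,\omega),\xi_\theta(t,\omega),\hat\upsilon(t,\omega))\bigr\}.
\end{align}
Verifying closed-valuedness (or lower-semicontinuity) of this multifunction along the constructed solution and ensuring progressive measurability of the selector is the genuinely delicate step.

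\emph{Step 2 (Dynkin and Jensen).} With $\upsilon$ so selected, polynomial growth of $V$ together with global Lipschitz continuity of $f,\hat f,\sigma,\hat\sigma,\rho,\hat\rho$ and the linear dynamics of $\Sigma_\theta$ make Dynkin's formula \cite{oksendal2005applied} applicable, giving
\begin{align}
v(t)=\Expectation{V(a,\hat a,\theta_0)}+\Expectation{\int_0^t \infgen V(\xi(s),\hat\xi(s),\xi_\theta(s))\diff s}.
\end{align}
Substituting \eqref{ineq:simfunction2}, bounding $\norm{\hat\upsilon(s)}\leq\norm{\hat\upsilon}_\infty$ pointwise and using concavity/monotonicity of $\psi_{\mathsf{ext}}$, and using Jensen's inequality for the convex $\eta$ (so that $-\Expectation{\eta(V)}\leq -\eta(\Expectation{V})$), yields the Dini-derivative inequality
\begin{align}
\dot v(t)\leq -\eta(v(t))+\psi_{\mathsf{ext}}(\Expectation{\norm{\hat\upsilon}_\infty^k})+\mathsf c.
\end{align}

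\emph{Step 3 (ISS bound and output translation).} A standard ISS-Lyapunov comparison argument (e.g.\ Sontag's construction for scalar ODEs with a class-$\mathcal{K}_\infty$ decay rate and bounded input plus constant disturbance) then produces $\tilde\beta\in\mathcal{KL}$, $\tilde\gamma\in\mathcal{K}_\infty$, and $\tilde{\mathsf c}\in\Real_{\geq 0}$ with
\begin{align}
v(t)\leq \tilde\beta(\Expectation{V(a,\hat a,\theta_0)},t)+\tilde\gamma(\Expectation{\norm{\hat\upsilon}_\infty^k})+\tilde{\mathsf c}.
\end{align}
From \eqref{ineq:simfunction1} and Jensen's inequality for the convex $\alpha$, one has $\alpha(\Expectation{\norm{\zeta_{a\upsilon}(t)-\hat\zeta_{\hat a\hat\upsilon}(t)}^k})\leq v(t)$. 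Inverting $\alpha$ and using sub-additivity of the concave class-$\mathcal{K}_\infty$ function $\alpha^{-1}$, namely $\alpha^{-1}(a+b+c)\leq\alpha^{-1}(3a)+\alpha^{-1}(3b)+\alpha^{-1}(3c)$, delivers \eqref{eq:bound_output} after setting $\beta(r,t)\define\alpha^{-1}(3\tilde\beta(r,t))$, $\gamma_{\mathsf{ext}}(r)\define\alpha^{-1}(3\tilde\gamma(r))$, and $\mathsf c'\define\alpha^{-1}(3\tilde{\mathsf c})$.

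The main obstacle is the Step 1 measurable selection: one must certify that the selector $\upsilon$ stays inside the admissible class $\ExtInputMap$ of $\mathds{F}$-progressively measurable processes while depending on $\hat\upsilon$ and on the jointly realized trajectories in a measurable way. Once this is granted, the remainder of the argument is a routine Itô-calculus-plus-comparison-lemma computation and the only Lyapunov-style care is in correctly applying Jensen's inequality to the convex $\alpha,\eta$ and the concave $\psi_{\mathsf{ext}}$ and $\alpha^{-1}$.
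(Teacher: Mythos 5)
Your outline is correct and follows essentially the same route as the proof the paper defers to (Theorem 3.5 of \cite{zamani2015approximations}): Dynkin's formula combined with Jensen's inequality for the convex $\eta$ and concave $\psi_{\mathsf{ext}}$, a scalar comparison lemma, and translation to the output error through the convex $\alpha$ and its concave inverse. The only content beyond that standard argument is your Step 1 measurable-selection concern, which the literature typically sidesteps because $u$ is realized through an explicit interface map.
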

	\begin{proof}
	The proof is similar to the one of Theorem 3.5 in \cite{zamani2015approximations} and is omitted here due to lack of space. 
	\end{proof}

\section{Interconnected Stochastic Hybrid Systems}
Next definition provides a notion of interconnection for stochastic hybrid susystems investigated in this paper.
\begin{definition}
	Consider $N \in \Natural_{\geq1}$ stochastic hybrid subsystems $$\Sigma_i = (\Real^{n_i}, \Real^{m_i}, \Real^{p_i}, \mathcal{U}_i, \mathcal{W}_i, f_i, \sigma_i, \rho_i, \Real^{q_{1i}}, \Real^{q_{2i}}, h_{1i},h_{2i}),$$ where $i \in [1;N]$, and a static matrix $M$ (the interconnection matrix) of an appropriate dimension defining the coupling of these subsystems. The interconnected stochastic hybrid system $$\Sigma = (\Real^{n}, \Real^{m},\mathcal{U}, f, \sigma, \rho, \Real^{q}, h),$$ denoted by $\mathcal{I}(\Sigma_1,\dots,\Sigma_N)$, follows by $n = \sum^N_{i=1} n_i, m = \sum^{N}_{i=1} m_i, q = \sum^{N}_{i=1}q_{1i}$, and the functions 
	\begin{align}
	f(x,u) &\define [f_1 (x_1,u_1,w_1);\dots;f_N (x_N,u_N,w_N)], \\
	\sigma(x) &\define [\sigma_1(x_1);\dots;\sigma_N(x_N)], \\
	\rho(x) &\define [\rho_1(x_1);\dots;\rho_N(x_N)], \\
	h(x) &\define [h_{11}(x_1);\dots;h_{1N} (x_N)],
	\end{align}
	
	where $u = [u_1;\dots;u_N]$, $x=[x_1;\dots;x_N]$ and with internal variables constrained by 
	\begin{align}
	[w_1;\dots;w_N] = M[h_{21}(x_1);\dots;h_{2N}(x_N)].
	\end{align}
\end{definition}

Assume we are given $N$ stochastic hybrid subsystems $\Sigma_i = (\Real^{n_i}, \Real^{m_i}, \Real^{p_i}, \mathcal{U}_i, \mathcal{W}_i, f_i, \sigma_i, \rho_i, \Real^{q_{1i}}, \Real^{q_{2i}}, h_{1i},h_{2i})$
together with their corresponding abstractions $\hat\Sigma_i = (\Real^{\hat n_i}, \Real^{\hat m_i}, \Real^{\hat p_i}, \hat{\mathcal{U}}_i, \hat{\mathcal{W}}_i, \hat f_i, \hat\sigma_i, \hat\rho_i, \Real^{q_{1i}}, \Real^{\hat q_{2i}}, \hat h_{1i},\hat h_{2i})$ and with $\sstf$ $V_i$ from
$\hat\Sigma_i$ to $\Sigma_i$. We use $\alpha_{i}$, $\eta_i$, $\psi_{i\mathrm{ext}}$, $A_{\theta_i}$, $B_{\theta_i}$, $C_{\theta_i}$, $D_{\theta_i}$, $H_i$, $W_i$, $\hat W_i$, and $X_i$ to denote the corresponding functions, matrices, and their corresponding conformal block partitions appearing in Definition \ref{d:stf}. The next theorem provides a compositional approach on the construction of abstractions of networks of stochastic hybrid systems. 
	\begin{theorem}
	Consider an interconnected system $\Sigma = \mathcal{I}(\Sigma_1,\dots,\Sigma_N)$ induced by $N \in \Natural_{\geq 1}$ stochastic hybrid subsystems $\Sigma_i$ and the interconnection matrix $M$. Suppose each subsystem $\Sigma_i$ admits an abstraction $\hat{\Sigma}_i$ with the corresponding $\sstf$ $V_i$ with respect to $\Sigma_{\theta_i} = (A_{\theta_ i},B_{\theta_i},C_{\theta_i},D_{\theta_i})$, $i \in [1;N]$. Suppose there exists $\mu_i > 0$, $i \in [1;N]$, symmetric matrix $\tilde{Q} \succeq 0$, and matrix $\hat{M}$ of appropriate dimension such that the matrix (in)equalities \eqref{eq:dyninterconnected1} and \eqref{eq:dycinterconnected2} 
	\begin{figure*}[!t]
		\small
		\begin{align} 
	\label{eq:dyninterconnected1}
	\begin{bmatrix} A_D^T\tilde{Q} + \tilde{Q}A_D& \tilde{Q}B_DS\begin{bmatrix} WM \\ I_{\tilde{q}} \end{bmatrix} \\ \begin{bmatrix}WM \\ I_{\tilde{q}} \end{bmatrix}^TS^TB_D^T\tilde{Q} & 0 \end{bmatrix} + \begin{bmatrix}C_D  & D_DS \begin{bmatrix} WM \\ I_{\tilde{q}} \end{bmatrix} \end{bmatrix}^T\begin{bmatrix}\mu_1X_1 & & \\ & \ddots & \\ & & \mu_N X_N \end{bmatrix} \begin{bmatrix}C_D  & D_DS \begin{bmatrix} WM \\ I_{\tilde{q}} \end{bmatrix} \end{bmatrix} &\preceq 0,\\
	\label{eq:dycinterconnected2}
	WMH &= \hat{W}\hat{M},
	\end{align}
		\normalsize
		\horizontalEqSep
	\end{figure*}
	are satisfied, where  $\tilde{q} = \sum^N_{i=1}q_{2i}$, and
	\begin{align}
	W &= \mathsf{diag}(W_1,\ldots,W_N), 
	\hat{W} = \mathsf{diag}(\hat{W}_1,\ldots,\hat{W}_N), 
	H = \mathsf{diag}(H_1,\ldots,H_N), \nonumber \\
	A_D &= \mathsf{diag}(A_{\theta_1},\dots,A_{\theta_N}), B_D = \mathsf{diag}(B_{\theta_1},\dots,B_{\theta_N}), \nonumber \\ C_D &= \mathsf{diag}(C_{\theta_1},\dots,C_{\theta_N}), 
	D_D = \mathsf{diag}(D_{\theta_1},\dots,D_{\theta_N}),
	\end{align}
	and $S$ is the following permutation matrix: 
	\small
	\begin{align}
	S = \begin{bmatrix}
	I_{r_{W_1}} & 0_{r_{W_2}} & \dots & 0_{r_{W_N}} & 0_{r_{H_1}} & 0_{r_{H_2}} &\dots & 0_{r_{H_N}}  \\
	0_{r_{W_1}} & 0_{r_{W_2}} & \dots & 0_{r_{W_N}} & I_{r_{H_1}} & 0_{r_{H_2}}& \dots & 0_{r_{H_N}} \\
	0_{r_{W_1}} & I_{r_{W_2}} & \dots & 0_{r_{W_N}} & 0_{r_{H_1}} & 0_{r_{H_2}}& \dots & 0_{r_{H_N}} \\
	0_{r_{W_1}} & 0_{r_{W_2}} & \dots & 0_{r_{W_N}} & 0_{r_{H_1}} & I_{r_{H_2}}& \dots & 0_{r_{H_N}} \\
	\vdots  & \ & \ddots & \vdots & \vdots & \ & \ddots & \vdots \\
	0_{r_{W_1}} & 0_{r_{W_2}} & \dots & I_{r_{W_N}} & 0_{r_{H_1}} & 0_{r_{H_2}}& \dots & 0_{r_{H_N}} \\
	0_{r_{W_1}} & 0_{r_{W_2}} & \dots & 0_{r_{W_N}} & 0_{r_{H_1}} & 0_{r_{H_2}}& \dots & I_{r_{H_N}}
	\end{bmatrix},
	\end{align}
	\normalsize
	where, for each $i \in [1;N]$, $r_{W_i}$ and $r_{H_i}$ denote the number of rows in $W_i$ and $H_i$, respectively.
	\normalsize
	Then $$V(x,\hat{x},\theta) \define \sum^{N}_{i=1}\mu_i V_i(x_i,\hat{x}_i,\theta_i) + \theta^T \tilde{Q} \theta,$$ where $\theta \defeq [\theta_1;\dots;\theta_N] \in \Real^{l_{\theta}}, l_{\theta} = \sum_{i=1}^{N} l_{\theta_i}$, is an $\ssf$ from the interconnected system $\hat{\Sigma} \define \mathcal{I}(\hat{\Sigma_1},\dots,\hat{\Sigma}_N)$, with the coupling matrix $\hat{M}$, to $\Sigma$.
\end{theorem}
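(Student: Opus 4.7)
The plan is to verify the two conditions defining an $\ssf$ from $\hat{\Sigma}$ to $\Sigma$ for the composite function $V(x,\hat{x},\theta) = \sum_i \mu_i V_i(x_i,\hat{x}_i,\theta_i) + \theta^T \tilde{Q}\theta$. For the lower bound \eqref{ineq:simfunction1}, I would use $\tilde{Q}\succeq 0$ to discard the quadratic-in-$\theta$ summand, invoke each per-subsystem bound $\alpha_i(\|h_{1i}(x_i) - \hat{h}_{1i}(\hat{x}_i)\|^k) \leq V_i$, and aggregate via standard $\mathcal{K}_{\infty}$ manipulations (using convexity and min-type combinations) to produce a single convex $\alpha \in \mathcal{K}_{\infty}$ acting on $\|h(x) - \hat{h}(\hat{x})\|^k$, where $h$ is the concatenation of the $h_{1i}$.

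For the infinitesimal generator condition \eqref{ineq:simfunction2}, linearity of $\infgen$ yields $\infgen V = \sum_i \mu_i \infgen V_i + \theta^T(A_D^T\tilde{Q} + \tilde{Q}A_D)\theta + 2\theta^T\tilde{Q}B_D u_\theta$, where $u_\theta$ stacks the block inputs $u_{\theta_i}$. Applying the $\sstf$ dissipativity \eqref{ineq:defDiss} to each $V_i$ upper bounds $\sum_i \mu_i \infgen V_i$ by $-\sum_i \mu_i \eta_i(V_i) + \sum_i \mu_i \psi_{i,\mathsf{ext}}(\|\hat{u}_i\|^k) + z^T \mathsf{diag}(\mu_1 X_1,\dots,\mu_N X_N) z + \sum_i \mu_i \mathsf{c}_i$, with $z_i = C_{\theta_i}\theta_i + D_{\theta_i}u_{\theta_i}$ and $z = C_D\theta + D_D u_\theta$. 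The key reduction is then to substitute the interconnection constraints $w = M\bar{h}$ for $\Sigma$ and $\hat{w} = \hat{M}\bar{\hat{h}}$ for $\hat{\Sigma}$ (with $\bar{h},\bar{\hat{h}}$ the stacked internal outputs) and apply \eqref{eq:dycinterconnected2} to rewrite $Ww - \hat{W}\hat{w} = WM\bar{h} - \hat{W}\hat{M}\bar{\hat{h}} = WM(\bar{h}-H\bar{\hat{h}})$. Reordering via the permutation $S$ gives $u_\theta = S\begin{bmatrix} WM \\ I_{\tilde{q}} \end{bmatrix}(\bar{h}-H\bar{\hat{h}})$, so that assembling every quadratic term into the augmented vector $[\theta;\bar{h}-H\bar{\hat{h}}]$ reproduces exactly the matrix appearing on the left-hand side of \eqref{eq:dyninterconnected1}. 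Invoking \eqref{eq:dyninterconnected1} then renders this composite quadratic nonpositive, canceling the internal-input-dependent terms.

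What remains is routine aggregation: bounding $\sum_i \mu_i \eta_i(V_i)$ from below by a single convex $\eta(V) \in \mathcal{K}_{\infty}$, bounding $\sum_i \mu_i \psi_{i,\mathsf{ext}}(\|\hat{u}_i\|^k)$ from above by a concave $\psi_{\mathsf{ext}}(\|\hat{u}\|^k)$ with $\hat{u}=[\hat{u}_1;\dots;\hat{u}_N]$, and collapsing constants into $\mathsf{c}$. The main obstacle I anticipate is the careful bookkeeping around the permutation $S$ that aligns the block-diagonal supply-rate matrix $\mathsf{diag}(\mu_i X_i)$ with the interleaved ordering of $u_\theta$; a secondary subtlety is ensuring that the aggregated $\eta$ dominates the decay of $V$ in its entirety, including the $\theta^T\tilde{Q}\theta$ component, which leverages negativity in the upper-left block of \eqref{eq:dyninterconnected1} together with condition \eqref{ineq:DXnecessary} to rule out any $u_\theta$-dependent growth offsetting this decay.
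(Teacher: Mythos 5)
Your proposal follows the paper's proof essentially step for step: the same aggregation of the $\alpha_i^{-1}$ into a concave $\underline{\alpha}$ whose inverse gives the convex $\alpha$, the same use of the interconnection constraints together with \eqref{eq:dycinterconnected2} to rewrite the stacked $u_\theta$ as $S[WM;\,I_{\tilde q}](\bar h - H\bar{\hat h})$, the same assembly of the $\dot\theta^T\tilde Q\theta$ and supply-rate terms into a quadratic form in $[\theta;\,\bar h - H\bar{\hat h}]$ that is annihilated by \eqref{eq:dyninterconnected1}, and the same perturbation-function constructions of $\eta$ and $\psi_{\mathsf{ext}}$. The only remark worth adding is that the ``secondary subtlety'' you flag --- whether $\sum_i\mu_i\eta_i(V_i)$ actually dominates $\eta$ evaluated at the \emph{full} $V$, including the $\theta^T\tilde Q\theta$ summand --- is a genuine issue that the paper's own chain of inequalities passes over silently (its definition of $\eta$ only certifies $\eta(\sum_i\mu_iV_i)\le\sum_i\mu_i\eta_i(V_i)$, not $\eta(V)\le\sum_i\mu_i\eta_i(V_i)$ when $\tilde Q\neq 0$), so you are not missing a step that the paper supplies.
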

\begin{proof}
	The proof is inspired by that of Theorem 4.2 in \cite{7857702}. First we show that the inequality (\ref{ineq:simfunction1}) holds for some convex $\mathcal{K}_{\infty}$ function $\alpha$. As also argued in the proof of Theorem 4.2 in \cite{zamani2015approximations}, for any $x = [x_1;\dots;x_N] \in \Real^n$, any $\hat{x} = [\hat{x}_1;\dots;\hat{x}_N] \in \Real^{\hat{n}}$, and any $\theta \defeq [\theta_1;\dots;\theta_N] \in \Real^{l_{\theta}}$, one gets:
	\small
	\begin{align*}
	\norm{h(x) - \hat{h}(\hat{x})}^k &\leq  N^{\max\{\frac{k}{2},1\}-1}	\sum^{N}_{i=1}\norm{h_{1i}(x_i) - \hat{h}_{1i}(\hat{x}_i)}^k \nonumber \\ 
	&\leq N^{\max\{\frac{k}{2},1\}-1}\sum^{N}_{i=1}\alpha^{-1}_i(V_i(x_i,\hat{x}_i,\theta_i))  \nonumber \\ &\leq \underline{\alpha}(V(x,\hat{x},\theta)), \nonumber
	\end{align*}
	\normalsize
	for any $k \geq 1$, where $\underline{\alpha}$ is a $\mathcal{K}_{\infty}$ function defined as 
	\begin{align}
	\underline{\alpha}(s) \define \begin{cases}   \max\limits_{\vec{s}  \geq 0} & N^{\max\{\frac{k}{2},1\}-1}\sum^N\limits_{i=1}\alpha^{-1}_i(s_i)\\ \mbox{s.t.} & \mu^T\vec{s} = s,\end{cases}
	\end{align}
	where $\vec{s} = [s_1;\dots;s_N] \in \Real^N$ and $\mu = [\mu_1;\dots;\mu_N]$. The function $\underline{\alpha}$ is a concave function as argued in \cite{zamani2015approximations}. By defining the convex function\footnote{The inverse of a strictly increasing concave (resp. convex) function is a strictly increasing convex (resp. concave) function.} $\alpha(s) = \underline{\alpha}^{-1}(s), \forall s \in \Real_{\geq 0}$, one obtains
	$$\alpha(\norm{h_{1}(x) - \hat{h}_1(\hat{x})}^k ) \leq V(x,\hat{x},\theta),$$
	satisfying inequality \eqref{ineq:simfunction1}. 
	Now we prove the inequality \eqref{ineq:simfunction2}. Consider any $x=[x_1;\dots;x_N] \in \Real^n, \hat{x} = [\hat{x}_1;\dots;\hat{x}_N] \in \Real^{\hat{n}},$ and $\hat{u} = [\hat{u}_1;\dots;\hat{u}_N] \in \Real^{\hat{m}}$. For any $i \in [1;N]$, there exists $u_i \in \Real^{m_i}$, consequently, a vector $u = [u_1;\ldots;u_N] \in \Real^m$, satisfying  (\ref{ineq:defDiss}) for each pair of subsystems $\Sigma_i$ and $\hat{\Sigma}_i$ with the internal inputs given by $[w_1;\dots;w_N] = M[h_{21}(x_1);\dots;h_{2N}(x_N)]$ and $[\hat{w}_1;\dots;\hat{w}_N] = \hat{M}[\hat{h}_{21}(\hat{x}_1);\dots;\hat{h}_{2N}(\hat{x}_N)],$ respectively.  
	The dynamics of $\Sigma_{\theta_i}$, $i \in [1;N]$, can be lumped together into a single auxiliary system as the following:
	\small
	\begin{align}
	\dot{\theta}(t) &= A_D\theta(t) + B_DS	\begin{bmatrix}
	W_1w_1 - \hat{W}_1\hat{w}_1 \\ \vdots \\ W_Nw_N - \hat{W}_N\hat{w}_N \\ h_{21}(x_1) - H_1\hat{h}_{21}(\hat{x}_1) \\ \vdots \\  h_{2N}(x_N) - H_N\hat{h}_{2N}(\hat{x}_N)	\end{bmatrix} \nonumber \\ &= A_D\theta(t) + B_DS\begin{bmatrix}WM \\ I_{\tilde{q}} \end{bmatrix} \begin{bmatrix} h_{21}(x_1) - H_1\hat{h}_{21}(\hat{x}_1) \\ \vdots \\  h_{2N}(x_N) - H_N\hat{h}_{2N}(\hat{x}_N)\end{bmatrix}, \nonumber\\
	z(t) &= C_D\theta(t) + D_DS\begin{bmatrix}
	W_1w_1 - \hat{W}_1\hat{w}_1 \\ \vdots \\ W_Nw_N - \hat{W}_N\hat{w}_N \\ h_{21}(x_1) - H_1\hat{h}_{21}(\hat{x}_1) \\ \vdots \\  h_{2N}(x_N) - H_N\hat{h}_{2N}(\hat{x}_N)	\end{bmatrix} \nonumber \\&= C_D\theta(t) + D_DS\begin{bmatrix}WM \\ I_{\tilde{q}} \end{bmatrix} \begin{bmatrix} h_{21}(x_1) - H_1\hat{h}_{21}(\hat{x}_1) \\ \vdots \\  h_{2N}(x_N) - H_N\hat{h}_{2N}(\hat{x}_N)\end{bmatrix},
	\end{align}
	\normalsize
	where $z = [z_1;\dots;z_N]$. We now consider the infinitesimal generator of the function $V$, and employ the previous auxiliary system and conditions \eqref{eq:dyninterconnected1} and \eqref{eq:dycinterconnected2} to derive the chain of inequalities given in \eqref{ineq:inf_storage},
	\begin{figure*}[!t]
	\small
	\begin{align}\notag
	\infgen V(x,\hat{x},\theta) &= \sum^N_{i=1}\mu_i\infgen V_i(x_i,\hat{x}_i,\theta_i) + \dot{\theta}^T\tilde{Q}\theta + \theta^T\tilde{Q}\dot{\theta}\nonumber 
	\leq \sum^N_{i=1} \mu_i \Bigg(\!-\eta_i(V_i(x_i,\hat{x}_i,\theta_i)) + \psi_{i\mathsf{ext}}(\norm{\hat{u}_i}^k) \nonumber  +  z_i^T 
	X_i z_i + \mathsf c_i\!\Bigg) \!+\! \dot{\theta}^T\tilde{Q}\theta \!+\! \theta^T\tilde{Q}\dot{\theta}\nonumber  \\
	&=  -\sum^N_{i=1}\mu_i \eta_i(V_i(x_i,\hat{x}_i,\theta_i)) + \sum^N_{i=1}\mu_i\psi_{i\mathsf{ext}}(\norm{\hat{u}_i}^k)\nonumber + 
	\begin{bmatrix}
	z_1 \\ \vdots \\ z_N \end{bmatrix}^T 
	\begin{bmatrix} \mu_1X_1 & & \\ & \ddots & \\ & & \mu_NX_N \end{bmatrix}		
	\underbrace{\begin{bmatrix} z_1 \\ \vdots \\ z_N \end{bmatrix}}_{z}
	\nonumber \\ &\quad	+
	\Theta(x,\theta)^T
	\begin{bmatrix} A_D^T\tilde{Q}+\tilde{Q}A_D &  \tilde{Q}B_D S\begin{bmatrix}WM \\ I_{\tilde{q}}\end{bmatrix}
	\\
	\begin{bmatrix}WM \\ I_{\tilde{q}}\end{bmatrix}^TS^TB_D^T\tilde{Q} & 0 
	\end{bmatrix}
	\Theta(x,\theta)
	+ \mathsf c^{\prime}\nonumber \\
	&=  -\sum^N_{i=1}\mu_i \eta_i(V_i(x_i,\hat{x}_i,\theta_i)) +  \sum^N_{i=1}\mu_i\psi_{i\mathsf{ext}}(\norm{\hat{u}_i}^k)\nonumber + \Theta(x,\theta)^T 
		\begin{bmatrix} A_D^T\tilde{Q}+\tilde{Q}A_D &  \tilde{Q}B_D S\begin{bmatrix}WM \\ I_{\tilde{q}}\end{bmatrix}
	\\
	\begin{bmatrix}WM \\ I_{\tilde{q}}\end{bmatrix}^TS^TB_D^T\tilde{Q} & 0 
	\end{bmatrix}
	\Theta(x,\theta) \nonumber \\
	& \quad +
		\Theta(x,\theta)^T
		\begin{bmatrix}
			C_D & D_D S\begin{bmatrix}WM \\ I_{\tilde{q}} \end{bmatrix}
		\end{bmatrix}^T
		\begin{bmatrix} 
			\mu_1X_1 & & \\ & \ddots & \\ & & \mu_NX_N
		\end{bmatrix}	
	\begin{bmatrix}
	C_D & D_D S\begin{bmatrix}WM \\ I_{\tilde{q}} \end{bmatrix}
	\end{bmatrix}
	\Theta(x,\theta) + \mathsf c^{\prime}
	 \\\label{ineq:inf_storage}
	&\leq -\eta(V(x,\hat{x},\theta)) + \psi_{\mathsf{\mathsf{ext}}}(\norm{\hat{u}}^k) + \mathsf c^{\prime},
	\end{align} 
	\horizontalEqSep
	\normalsize	
	\end{figure*}
	where $\mathsf c^{\prime} = \sum_{i=1}^{N} \mu_i \mathsf{c}_i,$
	\small
	\begin{align}
	 \Theta(x,\theta) \define \begin{bmatrix} \theta_1 \\ \vdots \\ \theta_N \\  h_{21}(x_1) - H_1\hat{h}_{21}(\hat{x}_1) \\ \vdots \\  h_{2N}(x_N) - H_N\hat{h}_{2N}(\hat{x}_N) \end{bmatrix},
	\end{align}
	\normalsize
	and the functions $\eta \in \mathcal{K}_{\infty}$ and $\psi_{\mathsf{\mathsf{ext}}} \in \mathcal{K}_{\infty} \cup \{0\}$ are defined as
	\begin{align}
	\eta(s) \coloneqq \begin{cases} \min\limits_{\vec{s}  \geq 0} & \sum^N_{i=1} \mu_i\eta_i(s_i) \\ \mbox{s.t.} & \mu^T\vec{s} = s,\end{cases}
	\end{align}
	\begin{align}
	\psi_{\mathsf{\mathsf{ext}}}(s) \coloneqq \begin{cases} \max\limits_{\vec{s}  \geq 0} & \sum^N_{i=1} \mu_i\psi_{i\mathsf{ext}}(s_i) \\ \mbox{s.t.} & \norm{\vec{s}} \leq s.\end{cases}\end{align}
	It remains to show that $\eta$ is a convex function and $\psi_{\mathsf{\mathsf{ext}}}$ is a concave one. Let us recall that by assumption functions $\eta_i, \   \forall i \in [1;N]$, are convex functions. Thus the function $\eta$ above defines a {\it perturbation function} which is a convex one; see \cite{boyd2004convex} for further details. By similar reasoning, by assumption $\psi_{i\mathsf{ext}}$, $\forall i \in [1;N]$, are concave functions. We conclude that $\psi_{\mathsf{\mathsf{ext}}}$ is a concave function. Hence, we conclude $V$ is an $\ssf$ function from $\hat{\Sigma}$ to $\Sigma$.
\end{proof}
In the next section, we consider a specific class of stochastic hybrid systems $\Sigma$, and a specific candidate $\sstft$ function $V$. We derive conditions under which a given $\hat\Sigma$ is an abstraction of $\Sigma$ and $V$ is an $\sstft$  from $\hat\Sigma$ to $\Sigma$. 
\begin{remark}
If $C_{\theta_i}$ is the zero matrix and $D_{\theta_i}$ is the identity matrix (i.e. $\Sigma_{{\theta}_i}$ is a static map), $\forall i\in [1;N]$, then matrix inequality \eqref{eq:dyninterconnected1} reduces to matrix inequality (15) in \cite[Theorem 7]{awanIFAC17} (which is a stochastic counterpart of matrix inequality (IV.1) in \cite[Theorem 4.2]{7857702}).
\end{remark}
\section{A Class of Stochastic Hybrid Systems}
We consider a specific class of stochastic hybrid systems with the drift, diffusion, reset, and output functions given by 
\begin{align}
\label{eq:JLSS}
\diff \xi(t) &= (A\xi (t) +  B\upsilon (t) + E\varphi(t,F\xi) + D\omega (t))\diff t  \nonumber  +G\diff W_t +  \sum_{i=1}^{\mathsf{r}}R_i\diff P_t^i,\nonumber\\
\zeta_1(t) &= C_1\xi(t),\nonumber\\
\zeta_2(t) &= C_2\xi(t),
\end{align}
where $A \in \Real^{n\times n}, B\in \Real^{n\times m}, D\in \Real^{n\times p}, E \in \Real^{n\times l_k}, F \in \Real^{l_k \times n}, G \in \Real^{n\times 1}, R_i \in \Real^{n}, \forall i \in [1;\mathsf{r}], C_1 \in \Real^{q_1\times n}$, and $C_2 \in \Real^{q_2\times n}$.
The vector $R_i$ and scalar $\lambda_i>0$ (rate of the Poisson process), $\forall i \in [1;\mathsf{r}]$, parametrize the jumps associated with events $i$. 
The time-varying non-linearity is the one considered in \cite{accikmecse2011observers}, which satisfies an incremental quadratic inequality: 
for all $\tilde{M} \in \mathcal{M}$, where $\mathcal{M}$ is the set of symmetric matrices referred to as incremental multiplier matrices, the following incremental quadratic constraint holds for all $t \in \Real_{\geq 0}$, and $k_1, k_2 \in \Real^{l_k}$:
\begin{align}
\begin{bmatrix}k_2 - k_1 \\ \varphi(t,k_2) - \varphi(t,k_1) \end{bmatrix}^T \tilde{M} \begin{bmatrix}k_2 - k_1 \\ \varphi(t,k_2) - \varphi(t,k_1) \end{bmatrix} \geq 0.
\end{align}
To facilitate subsequent analysis, we write matrix $\tilde{M}$ in the following conformal partitioned form
\begin{align}
\tilde{M} = \begin{bmatrix}
M_{11} & M_{12} \\ M_{12}^T & M_{22}
\end{bmatrix}.
\end{align}
We use the tuple
$$\Sigma = (A,B,C_1,C_2,D,E,F, G,\mathsf{R}, \varphi, \lambda),$$
where $\mathsf{R} = \{R_1,\dots,R_{\mathsf{r}}\}$ and $\lambda = \{\lambda_1,\dots,\lambda_{\mathsf r}\}$, to refer to the class of system of the form \eqref{eq:JLSS}. We now consider a specific candidate function and derive conditions under which it is an $\sstft$  from $\hat{\Sigma}$ to $\Sigma$. 
\subsection{Stochastic storage function}
	Here, we consider a candidate $\sstft$ of the form
	\begin{align}
		\label{eq:quadraticstoragefunction}
		V(x,\hat{x},\theta) = (x-P\hat{x})^T\widehat{M}(x-P\hat{x}) + \theta^T\Lambda\theta,
	\end{align}
	where $P$, $\widehat{M}^T= \widehat{M} \succ 0$, and $\Lambda = \Lambda^T \succ 0$ are matrices of appropriate dimensions.  In order to show that $V(x,\hat{x},\theta)$ in \eqref{eq:quadraticstoragefunction} is an $\sstft$ from an abstraction $\hat{\Sigma}$ to the concrete system $\Sigma$, with respect to $\Sigma_{\theta} = (\Adyn, \Bdyn, \Cdyn, \Ddyn)$, where $B_{\theta} = \begin{bmatrix} B_1 & B_2 \end{bmatrix}$ and $D_{\theta} = \begin{bmatrix} D_1 & D_2 \end{bmatrix}$, we require the following assumptions on the concrete system $\Sigma$ and on $\Sigma_{\theta}$.
	
\begin{assumption}
	\label{assum:first}
	Let $\Sigma$ = $(A,B,C_1,C_2,D,E,F,G,\mathsf{R},\varphi, \lambda)$. There exist matrices $\widehat{M} \succ 0$, $K$, $X$, $L_1$, $Z$, $\Lambda$, $\Adyn$, $\Cdyn$, $B_\theta:=[B_1~B_2]$, $D_\theta:=[D_1~D_2]$, and positive constants $\hat{\kappa}$ and $\bar{\kappa}$, such that $$ D_2^TX D_2 \preceq 0,$$and the  (in)equalities given in \eqref{eq:dzw} and \eqref{ineq:bmi} hold,
	\begin{figure*}[!t]
	\footnotesize
	\begin{align}\label{eq:dzw}
	D &= ZW,\\ \label{ineq:bmi}
\begin{bmatrix}
	\Delta\!&\! \widehat{M}Z \!&\! \widehat{M}(BL_1+E)  \!&\!  C_2^TB_2^T\Lambda \\  
	Z^T\widehat{M} \!&\! 0 \!&\! 0 \!&\! B_1^T\Lambda \\ 
	(BL_1+E)^T\widehat{M} \!&\! 0 \!&\! 0 \!&\! 0 \\
	\Lambda B_2 C_2 \!&\! \Lambda B_1 \!&\! 0 \!&\! {\Adyn}^T\Lambda + \Lambda{\Adyn}
	\end{bmatrix} \!&\preceq\! \begin{bmatrix}
	-\hat{\kappa}\widehat{M} +C_2^TD_2^TXD_2C_2 -  F^T M_{11} F \!&\!C_2^TD_2^TXD_1 \!&\! -F^T M_{12} \!&\! C_2^TD_2^TXC_{\theta} \\
	D_1^T X D_2 C_2  \!&\!  D_1^TXD_1 \!&\! 0 \!&\! D_1^TXC_{\theta} \\
	-M_{12}^T F  \!&\! 0 \!&\! -M_{22} \!&\! 0\\
	C_{\theta}^TXD_2C_2 \!&\! C_{\theta}^TXD_1 \!&\! 0 \!&\! {\Cdyn}^TX{\Cdyn} - \bar{\kappa}\Lambda
	\end{bmatrix},
	\end{align}
	\horizontalEqSep
	\end{figure*}
	\normalsize
	where
	\begin{align}
	\Delta &= 
	(A+BK)^T\widehat{M} + \widehat{M}(A+BK).
	\end{align} 
	
\end{assumption}
	An equivalent geometric characterization of \eqref{eq:dzw} is given by the following lemma.
	\begin{lemma}
		Given $D$ and $Z$, the condition \eqref{eq:dzw} is satisfied for some matrix $W$ if and only if
		\begin{align}
		\mathrm{im} \ D \subseteq \mathrm{im} \ Z.
		\end{align}
	\end{lemma}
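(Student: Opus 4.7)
The plan is to prove the two directions of the equivalence separately, since both are elementary consequences of the column-space interpretation of matrix multiplication.

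For the forward direction, I would assume that a matrix $W$ exists satisfying $D = ZW$. Then for any vector $v$ in the domain of $D$, one has $Dv = Z(Wv)$, which exhibits $Dv$ as an element of $\mathrm{im}\, Z$. Since this holds for every $v$, the inclusion $\mathrm{im}\, D \subseteq \mathrm{im}\, Z$ follows immediately. This direction requires no hypothesis beyond the definition of the image of a matrix and the associativity of matrix-vector multiplication.

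For the reverse direction, I would assume $\mathrm{im}\, D \subseteq \mathrm{im}\, Z$ and construct $W$ column-by-column. Writing $D = [d_1 \ \cdots \ d_s]$, each column $d_i$ lies in $\mathrm{im}\, D \subseteq \mathrm{im}\, Z$, so there exists a vector $w_i$ (of appropriate dimension) with $d_i = Z w_i$. Setting $W := [w_1 \ \cdots \ w_s]$ yields $ZW = [Zw_1 \ \cdots \ Zw_s] = [d_1 \ \cdots \ d_s] = D$, as required. Since the hypothesis $\mathrm{im}\, D \subseteq \mathrm{im}\, Z$ is precisely what guarantees existence of each $w_i$, the construction goes through without any further structural assumption.

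There is no real obstacle here; the only thing worth flagging is that $W$ is not unique in general (any choice of $w_i \in Z^{-1}\{d_i\}$ works, and these fibers have positive-dimensional affine structure whenever $Z$ has a nontrivial kernel). For the purpose of this lemma, existence is all that matters, so one may, for instance, use the Moore--Penrose pseudoinverse and take $W = Z^{+} D$, which automatically satisfies $ZW = D$ precisely under the stated image-inclusion condition. This provides an explicit constructive choice should it be useful elsewhere in the paper.
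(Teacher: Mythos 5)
Your proof is correct; both directions are the standard column-space argument, and the explicit choice $W = Z^{+}D$ via the Moore--Penrose pseudoinverse is a valid constructive witness. The paper itself states this lemma without proof (treating it as a routine linear-algebra fact), so your argument supplies exactly the expected reasoning and nothing is missing.
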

Now, we provide one of the main results of this section showing under which conditions $V$ in \eqref{eq:quadraticstoragefunction} is an $\sstft$.
\begin{remark}
Remark that when the non-linearity in \eqref{eq:JLSS} reduces to the one described in \cite[Section V]{7857702} and $\Sigma_{\theta}$ is a static map, matrix inequality \eqref{ineq:bmi} reduces to (V.5) in \cite[Theorem 5.5]{7857702}. Note also that in the absence of the non-linearity in \eqref{eq:JLSS}, matrix inequality \eqref{ineq:bmi} is feasible if the pair $(A,B)$ is stabilizable and $A_{\theta}$ is Hurwitz. 
\end{remark}

\begin{theorem}
	Let $\Sigma = (A,B,C_1,C_2,D,E,F,G,\mathsf{R},\varphi, \lambda)$, and $\hat{\Sigma} = (\hat{A},\hat{B},\hat{C}_1,\hat{C}_2,\hat{D},\hat{E},\hat{F},\hat{G},\hat{\mathsf{R}},\varphi,\hat{\lambda})$ with the same external output dimension. Suppose Assumption \ref{assum:first} holds and there exist matrices $P$, $Q$, $H$, $\hat{W}$ and $L_2$ of appropriate dimensions such that: 
	\begin{subequations}
		\label{eq:assumAll}
		\begin{align}
		AP &= P\hat{A} - BQ \label{eq:appabq}\\
		C_1P &= \hat{C}_1 \label{eq:c1pHatc1}\\
		C_2P &= H\hat{C}_2 \label{x12c2px12hHatc2}\\
		FP &= \hat{F} \label{fpHatf}\\
		E &= P\hat{E} + B(L_2 - L_1) \label{epHatEbl1l2}\\
		P\hat{D} &= Z\hat{W} \label{pHatdzHatw}.
		\end{align}
	\end{subequations}
Then, function $V$ defined in \eqref{eq:quadraticstoragefunction} is an $\sstft$ from $\hat{\Sigma}$ to $\Sigma$, with respect to  $\Sigma_{\theta} = (A_{\theta}, B_{\theta}, C_{\theta}, D_{\theta})$.
\end{theorem}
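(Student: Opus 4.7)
The plan is to verify the two conditions of Definition \ref{d:stf} for $V$ in \eqref{eq:quadraticstoragefunction} with $k=2$. The function $V$ is quadratic, hence of polynomial growth, and the necessary condition $D_2^T X D_2 \preceq 0$ is explicitly given in Assumption \ref{assum:first}. Writing $e \defeq x - P\hat x$ and invoking \eqref{eq:c1pHatc1}, one has $h_1(x) - \hat h_1(\hat x) = C_1 x - \hat C_1 \hat x = C_1 e$, so
\begin{align*}
\|h_1(x) - \hat h_1(\hat x)\|^2 \leq \frac{\|C_1\|^2}{\lambda_{\min}(\widehat M)}\, e^T\widehat M e \leq \frac{\|C_1\|^2}{\lambda_{\min}(\widehat M)}\, V(x,\hat x,\theta),
\end{align*}
so that \eqref{in:defV} holds with a linear (hence convex) $\alpha$.

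For \eqref{ineq:defDiss}, the plan is to propose the interface map
\begin{align*}
u \defeq Q\hat x + K e + L_1\,\varphi(t,Fx) - L_2\,\varphi(t,\hat F\hat x) + R\hat u,
\end{align*}
where $R$ is a free matrix of appropriate dimension whose residual is absorbed later. Setting $\Delta\varphi \defeq \varphi(t,Fx) - \varphi(t,\hat F\hat x)$ and $y_1 \defeq Ww - \hat W\hat w$, the equalities \eqref{eq:appabq}, \eqref{fpHatf}, \eqref{epHatEbl1l2} together with \eqref{eq:dzw} and \eqref{pHatdzHatw} collapse the drift difference to
\begin{align*}
f(x,u,w) - P\hat f(\hat x,\hat u,\hat w) = (A+BK)e + (BL_1+E)\Delta\varphi + Z y_1 + (BR - P\hat B)\hat u,
\end{align*}
and \eqref{x12c2px12hHatc2} reduces the dynamic-map input to $u_\theta = [y_1;\, C_2 e]$, so $z = C_\theta\theta + D_1 y_1 + D_2 C_2 e$ is linear in the vector $\tilde v \defeq [e;\, y_1;\, \Delta\varphi;\, \theta]$.

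Computing $\infgen V$ via It\^o's formula for jump-diffusions yields four contributions: (i) a drift term whose quadratic-in-$\tilde v$ part matches exactly the left-hand side of \eqref{ineq:bmi}; (ii) the constant diffusion trace $G^T\widehat M G + \hat G^T P^T\widehat M P\hat G$; (iii) a jump contribution whose quadratic pieces $\sum_j \lambda_j R_j^T\widehat M R_j + \sum_j \hat\lambda_j \hat R_j^T P^T\widehat M P\hat R_j$ are constant while the linear-in-$e$ cross terms get bounded by $\varepsilon\, e^T\widehat M e + \mathrm{const}$ via Young's inequality; and (iv) the external-input mismatch $2e^T\widehat M(BR - P\hat B)\hat u$, which by Young's inequality is at most $\varepsilon\, e^T\widehat M e + \varepsilon^{-1}\hat u^T(BR - P\hat B)^T\widehat M(BR - P\hat B)\hat u$. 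The right-hand side of \eqref{ineq:bmi} is the sum of the quadratic $z^T X z$ (viewed in $\tilde v$), the indefinite form $-J^T\tilde M J$ with $J\tilde v = [Fe;\,\Delta\varphi]$, and $-\mathrm{diag}(\hat\kappa\widehat M,\, 0,\, 0,\, \bar\kappa\Lambda)$. Combining \eqref{ineq:bmi} with the incremental quadratic inequality $\tilde v^T J^T\tilde M J \tilde v \geq 0$ then yields $\tilde v^T(\mathrm{LHS})\tilde v \leq -\hat\kappa\, e^T\widehat M e - \bar\kappa\,\theta^T\Lambda\theta + z^T X z$.

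Choosing $\varepsilon$ small so that $2\varepsilon < \hat\kappa$ and collecting all constant residuals into a single $\mathsf c \in \Real_{\geq 0}$, one obtains
\begin{align*}
\infgen V(x,\hat x,\theta) \leq -\eta(V(x,\hat x,\theta)) + \psi_{\mathsf{ext}}(\|\hat u\|^2) + z^T X z + \mathsf c,
\end{align*}
with $\eta(s) \defeq \min(\hat\kappa - 2\varepsilon,\,\bar\kappa)\, s$ and $\psi_{\mathsf{ext}}$ linear, establishing \eqref{ineq:defDiss}. The main technical obstacle is the drift-difference bookkeeping: each equality in \eqref{eq:assumAll} and \eqref{eq:dzw} is consumed exactly once to collapse $f - P\hat f$ into the four quantities $(e,\Delta\varphi,y_1,\hat u)$. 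Since $P\hat B$ is not assumed to lie in $\mathrm{im}\,B$, the unavoidable mismatch $(BR - P\hat B)\hat u$ is precisely what forces the use of Young's inequality and accounts for the nontrivial $\psi_{\mathsf{ext}}$ term.
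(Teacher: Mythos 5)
Your proposal is correct and follows essentially the same route as the paper's proof: the same linear $\alpha$ from \eqref{eq:c1pHatc1}, the same interface map, the same collapse of the drift difference via \eqref{eq:assumAll} and \eqref{eq:dzw} into a quadratic form in $[x-P\hat x;\,Ww-\hat W\hat w;\,\delta\varphi;\,\theta]$ matched against \eqref{ineq:bmi}, the incremental quadratic constraint to discard the $\tilde M$-block, and Young's inequality to absorb the $(B\tilde R - P\hat B)\hat u$ and jump cross terms. The only (immaterial) difference is that you use a single Young parameter $\varepsilon$ with $2\varepsilon<\hat\kappa$ where the paper uses two parameters $\pi+\pi'<\hat\kappa$.
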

\begin{proof}
		We note that from \eqref{eq:c1pHatc1}, $\forall x \in \Real^n$ and $\forall \hat{x} \in \Real^{\hat{n}}$, we have $\norm{C_1x - \hat{C}_1\hat{x}}^2 = (x-P\hat{x})^TC_1^TC_1(x-P\hat{x})$. It can be readily verified that $\frac{\lambda_{\min}(\widehat{M})}{\lambda_{\max}(C_1^TC_1)}\norm{C_1x - \hat{C}_1\hat{x}}^2 \leq V(x,\hat{x},\theta)$ for all $\theta \in \Real^{l_{\theta}}$, implying that inequality \eqref{in:defV} holds with $\alpha(r) = \frac{\lambda_{\min}(\widehat{M})}{\lambda_{\max}(C_1^TC_1)}r$ for any $r \in \Real_{\geq 0}$, which is a convex function. We proceed to prove inequality \eqref{ineq:defDiss}. By the definition of $V$, one has 
	\begin{align*}
	&\partial_x V = 2(x-P\hat{x})^T\widehat{M}, 
	\partial_{\hat{x}}V = -2(x-P\hat{x})^T \widehat{M}P, \partial_{x,x}V = 2\widehat{M},
	\partial_{\hat{x},\hat{x}}V = 2P^T\widehat{M}P.\nonumber
	\end{align*}	
	Following the definition of $\infgen$, for any $x \in \Real^n,\hat{x} \in \Real^{\hat{n}}, \theta \in \Real^{l_{\theta}}$, one obtains:
		\begin{align*}
	\infgen{V}(x,\hat{x},\theta) &= 2(x-P\hat{x})^T\widehat{M}(Ax + E\varphi(Fx) + Bu + Dw)  - 2(x-P\hat{x})^T\widehat{M}P(\hat{A}\hat{x} + \hat{E}\varphi(\hat{F}\hat{x}) + \hat{B}\hat{u} +\hat{D}\hat{w}) \nonumber \\ &\quad+ G^T\widehat M G + \hat G^T P^T \widehat M P \hat G + 2(x-P\hat x)^T \widehat{M} \sum_{i=1}^{\mathsf r} \lambda_iR_i + \sum_{i=1}^{\mathsf r}\lambda_i R_i^T\widehat{M} R_i -
	2(x-P\hat x)^T \widehat{M} \sum_{i=1}^{\hat{\mathsf r}}\hat{\lambda}_iP\hat{R} \nonumber\\
	&\quad+ \sum_{i=1}^{\hat{\mathsf r}}\hat{\lambda}_i\hat{R}_i^T P^T\widehat{M} P \hat{R}_i + 2\theta^T\Lambda\Big(\Adyn\theta + \begin{bmatrix}B_1 & B_2\end{bmatrix}\begin{bmatrix} Ww - \hat{W}\hat{w} \\ C_2x - H\hat{C}_2\hat{x}\end{bmatrix}\Big).
	\end{align*}
	Given any $x \in \Real^n, \hat{x} \in \Real^{\hat{n}},$ and $\hat{u} \in \Real^{\hat{m}}$, we use the following {\it interface} function to choose $u \in \Real^m$:
	\small
	\begin{align}\label{eq:interface}
	u = K(x-P\hat{x}) + Q\hat{x} + \tilde{R}\hat{u} + L_1\varphi(t,Fx) - L_2\varphi(t,\hat{F}\hat{x}),~~
	\end{align}
	\normalsize
	where 
	$L_2$, $Q$, and $\tilde{R}$ are matrices of appropriate dimension.
	 Using the interface function in \eqref{eq:interface}, and the conditions \eqref{eq:dzw}, \eqref{eq:appabq}, \eqref{fpHatf}, \eqref{epHatEbl1l2}, and \eqref{pHatdzHatw}, one obtains:

	\begin{align}
 \infgen V(x,\hat{x},\theta) &=  2(x-P\hat{x})^T\widehat{M}\Big(A(x-P\hat{x}) + BK(x-P\hat{x}) 
	+ ZWw - Z\hat{W}\hat{w} + (B\tilde{R} - P\hat{B})\hat{u}
	+ (BL_1+E)\delta\varphi\Big) 
	\nonumber\\  &\quad   + G^T\widehat M G + \hat{G}^TP^T\widehat M P\hat G + \sum_{i=1}^{\mathsf r}\lambda_i R_i^T\widehat{M} R_i +
	\sum_{i=1}^{\hat{\mathsf r}}\hat{\lambda}_i\hat{R}_i^TP\widehat{M}P\hat{R}_i  +2(x-P\hat{x})^T\widehat{M}(\sum\limits_{i=1}^{\mathsf{r}}\lambda_i R_i\nonumber \\ 
	&\quad-\sum\limits_{i=1}^{\hat{\mathsf{r}}}\hat{\lambda}_iP\hat{R}_i ) + 2\theta^T\Lambda\Adyn\theta + 2\theta^T\Lambda B_1(Ww - \hat{W}\hat{w})\nonumber 
	 + 2\theta^T\Lambda B_2(C_2x - H\hat{C}_2\hat{x}),
	\end{align}
	\normalsize
	where $\delta\varphi = \varphi(t,Fx) - \varphi(t, \hat{F}\hat{x})$.
	 Using Young's inequality, Cauchy-Schwarz inequality, \eqref{ineq:bmi}, and \eqref{x12c2px12hHatc2}, one obtains the upper bound for $\infgen{V}(x,\hat{x},\theta)$ as given in \eqref{ineq:upperbound},
	\begin{figure*}[!t]
	\footnotesize
	\begin{align}
	\infgen{V}(x,\hat{x},\theta) &= 
	\begin{bmatrix}
		x - P\hat{x} \\ Ww - \hat{W}\hat{w} \\ \delta \varphi \\ \theta 
		\end{bmatrix}^T
		\begin{bmatrix}
		\Delta& \widehat{M}Z & \widehat{M}(BL_1+E)  &  C_2^TB_2^T\Lambda \\  
		Z^T\widehat{M} & 0 & 0 & B_1^T\Lambda \\ 
		(BL_1+E)^T\widehat{M} & 0 & 0 & 0 \\
		\Lambda B_2 C_2 & \Lambda B_1 & 0 & {\Adyn}^T\Lambda + \Lambda{\Adyn}
		\end{bmatrix}
	\begin{bmatrix} 
		x - P\hat{x} \\ Ww - \hat{W}\hat{w} \\  \delta \varphi \\ \theta 
		\end{bmatrix}
		\\ & \quad  + 2(x-P\hat{x})^T\widehat{M}(B\tilde{R} - P\hat{B})\hat{u} 
+2(x-P\hat{x})^T\widehat{M}\left(\sum\limits_{i=1}^{\mathsf{r}}\lambda_i R_i-\sum\limits_{i=1}^{\hat{\mathsf{r}}}\hat{\lambda}_iP\hat{R}_i\right)
	+ \mathsf{\tilde{c}}\nonumber	\\ 
	& \leq   
	\begin{bmatrix}
		x - P\hat{x} \\ Ww - \hat{W}\hat{w} \\ \delta \varphi \\ \theta 
	\end{bmatrix}^T
	 \begin{bmatrix}
	 -\hat{\kappa}\widehat{M} +p_{1} -  F^T M_{11} F & p_2 & -F^T M_{12} & C_2^TD_2^TXC_{\theta} \\
	 p_3  &  p_4 & 0 & D_1^TXC_{\theta} \\
	 -M_{12}^T F  & 0 & -M_{22} & 0\\
	 C_{\theta}^TXD_2C_2 & C_{\theta}^TXD_1 & 0 & {\Cdyn}^TX{\Cdyn} - \bar{\kappa}\Lambda
	 \end{bmatrix}
	\begin{bmatrix}
		x - P\hat{x} \\ Ww - \hat{W}\hat{w} \\ \delta \varphi \\ \theta 
	\end{bmatrix}  \nonumber \\
	&\quad + 2(x-P\hat{x})^T\widehat{M}(B\tilde{R} - P\hat{B})\hat{u} 
	+2(x-P\hat{x})^T\widehat{M}\left(\sum\limits_{i=1}^{\mathsf{r}}\lambda_i R_i-\sum\limits_{i=1}^{\hat{\mathsf{r}}}\hat{\lambda}_iP\hat{R}_i\right)
	+ \mathsf{\tilde{c}} \nonumber \\
	&\leq -(\hat{\kappa}-\pi -\pi^{\prime})(x-P\hat{x})^T\widehat{M}(x-P\hat{x}) + \frac{\norm{\sqrt{\widehat{M}} (B\tilde{R} - P\hat{B})}^2}{\pi}\norm{\hat{u}}^2  -\bar{\kappa}\theta^T\Lambda\theta\\	&\quad - 2\begin{bmatrix}
	 x-P\hat{x} \\ 
	 \delta\varphi
	\end{bmatrix}^T 
	\begin{bmatrix}
	F & 0_{l_k} \\
	0_{l_k} & I_{l_k}
	\end{bmatrix}^T \tilde M
	\begin{bmatrix}
	F& 0_{l_k} \\
	0_{l_k} & I_{l_k}
	\end{bmatrix}
	\begin{bmatrix}
	x-P\hat{x} \\ 
	\delta\varphi
	\end{bmatrix} \nonumber \\
	& \quad + \Bigg(\Cdyn\theta + \begin{bmatrix}
	D_1 & D_2
	\end{bmatrix}^T \begin{bmatrix}
	Ww - \hat{W}\hat{w}      \\
	C_2x - H\hat{C}_2\hat{x}
	\end{bmatrix}\Bigg)^T 
	X\Bigg(\Cdyn\theta + \begin{bmatrix}
	D_1 & D_2
	\end{bmatrix} \begin{bmatrix}
	Ww - \hat{W}\hat{w}      \\
	C_2x - H\hat{C}_2\hat{x}
	\end{bmatrix}\Bigg)  + \mathsf{\tilde{c}} +\frac{\norm{\small\sqrt{\widehat{M}}\left(\sum\limits_{i=1}^{\mathsf{r}}\lambda_i R_i-\sum\limits_{i=1}^{\hat{\mathsf{r}}}\hat{\lambda}_iP\hat{R}_i\right)}^2}{\pi^{\prime}} \nonumber \\
	&\leq -(\hat{\kappa}-\pi - \pi^{\prime})(x-P\hat{x})^T\widehat{M}(x-P\hat{x})  -\bar{\kappa}\theta^T\Lambda\theta+ \frac{\norm{\sqrt{\widehat{M}} (B\tilde{R} - P\hat{B})}^2}{\pi}\norm{\hat{u}}^2
	+ z^TXz  + \mathsf{\tilde{c}} + \mathsf{c^{\prime}} \nonumber \\\label{ineq:upperbound}
	&\leq -\tilde{\kappa}V(x,\hat{x},\theta) + \frac{\norm{\sqrt{\widehat{M}} (B\tilde{R} - P\hat{B})}^2}{\pi}\norm{\hat{u}}^2 +z^TXz + \mathsf{\tilde{c}} + \mathsf{c^{\prime}}
	\end{align}	
\horizontalEqSep
\end{figure*}
	where $\pi,\pi' \in \Real_{>0}$ satisfy $\pi + \pi^{\prime} < \hat{\kappa}$, $\tilde{\kappa} = \min\{\hat{\kappa} - \pi -\pi', \bar{\kappa}\}$, and 
	\small
	\begin{align}
	\mathsf{\tilde{c}} \!=&G^T\widehat M G \!+\! \hat{G}^TP^T\widehat M P\hat G \!+\! \sum_{i=1}^{\mathsf r}\lambda_i \!R_i^T\widehat{M} R_i\!+\!
	\sum_{i=1}^{\hat{\mathsf r}}\hat{\lambda}_i\!\hat{R}_i^TP^T\widehat{M}P\hat{R}_i \label{eq:c}, \\
	\mathsf{c^{\prime}} =& \frac{\norm{\small\sqrt{\widehat{M}}\left(\sum\limits_{i=1}^{\mathsf{r}}\lambda_i R_i-\sum\limits_{i=1}^{\hat{\mathsf{r}}}\hat{\lambda}_iP\hat{R}_i\right)}^2}{\pi^{\prime}}.
	\end{align}
	\normalsize
	Here, we have used the fact that for any $x \in \Real^n$ and any $\hat x \in \Real^{\hat n}$, one has \cite{accikmecse2011observers}, 	
	\begin{align}
	\begin{bmatrix}
	x-P\hat{x} \\ 
	\delta\varphi
	\end{bmatrix}^T 
	\begin{bmatrix}
	F  & 0_{l_k} \\
	0_{l_k} & I_{l_k}
	\end{bmatrix}^T \tilde M
		\begin{bmatrix}
	F  & 0_{l_k} \\
	0_{l_k} & I_{l_k}
	\end{bmatrix}
	\begin{bmatrix}
	x-P\hat{x} \\ 
	\delta\varphi
	\end{bmatrix} \geq 0.
	\end{align}
	Using the upper bound \eqref{ineq:upperbound}, the inequality \eqref{ineq:defDiss} is satisfied, implying that $V$ is an $\sstft$ from $\hat{\Sigma}$ to $\Sigma$, with respect to  $\Sigma_{\theta} = (A_{\theta}, B_{\theta}, C_{\theta}, D_{\theta})$, with the convex function $\eta(s) = \tilde{\kappa}s,$ concave function $\psi_{\mathsf{\mathsf{ext}}}(s) = \frac{\norm{\sqrt{\widehat{M}}(B\tilde{R}-P\hat{B})}^2}{\pi}s, \forall s \in \Real_{\geq 0}$, matrix $X$, and $\mathsf{c} = \mathsf{\tilde{c}} + \mathsf{c^{\prime}}$. 
\end{proof}
\begin{remark}
	Note that matrix $\tilde{R}$ is a free design parameter in the interface function. As explained in \cite{7857702} and \cite{girard2009hierarchical}, one can choose $\tilde{R}$ to minimize the function $\psi_{\mathsf{ext}}$ for $V$ and, hence, lower the upper bound on the error between the output behaviors of $\Sigma$ and $\hat{\Sigma}$. The choice of $\tilde{R}$ minimizing $\psi_{\mathsf{ext}}$ is given by 
	\begin{align}
		\tilde{R} = (B^T\widehat{M}B)^{-1}B^T\widehat{M}P\hat{B}. \label{eq:mintildeR}
	\end{align}
\end{remark}

\begin{remark}
	The constant $\mathsf{c}$, can be also minimized, thereby lowering the upper bound on the error between the output behaviours of $\Sigma$ and $\hat{\Sigma}$. One can choose $\hat{G}$ to be the zero matrix and choose $\hat{\lambda}$ and $\hat{\mathsf{R}}$ to solve the following optimization problem:
	\small
	\begin{align}
	&\argmin\limits_{\hat{\mathsf{R}}, \hat{\lambda} > 0} \quad \sum\limits^{\hat{\mathsf{r}}}_{i=1}\hat{\lambda}_i\hat{R}_i^TP^T\widehat{M}P\hat{R}_i - \frac{2(\sum\limits^{\mathsf{r}}_{i=0} \lambda_iR_i^T)\widehat{M}P(\sum\limits^{\hat{\mathsf{r}}}_{i=0}\hat{\lambda}_i\hat{R}_i)}{\pi^{\prime}}   +
	\frac{(\sum\limits^{\hat{\mathsf{r}}}_{i=1}\hat{\lambda}_i\hat{R}_i^T)P^T\widehat{M}P(\sum\limits^{\hat{\mathsf{r}}}_{i=1}\hat{\lambda}_i\hat{R}_i)}{\pi^{\prime}},
	\label{opt:rhatlambda}
	\end{align}
	\normalsize
	where $\hat{\lambda} = \{\hat{\lambda}_1, \dots, \hat{\lambda}_{\hat{\mathsf r}}\}$ and $\hat{\mathsf{R}} = \{\hat{R}_1,\dots,\hat{R}_{\hat{\mathsf r}}\}$. This optimization problem is, in general, a non-convex one. 
\end{remark}

In the following theorem we show that conditions \eqref{eq:appabq}, \eqref{eq:c1pHatc1}, \eqref{x12c2px12hHatc2},  \eqref{fpHatf}, and \eqref{epHatEbl1l2} are not only sufficient, but also necessary for \eqref{eq:quadraticstoragefunction} to be an $\sstft$ from $\hat{\Sigma}$ to $\Sigma$, provided that the interface function is as in $\eqref{eq:interface}$ for some matrices $K, Q, \tilde{R}, L_1,$ and  $L_2,$ of appropriate dimensions. 
\begin{theorem}
	Let $\Sigma = (A,B,C_1,C_2,D,E,F,G,\mathsf{R},\varphi, \lambda)$ and $\hat{\Sigma} = (\hat{A},\hat{B},\hat{C}_1,\hat{C}_2,\hat{D},\hat{E},\hat{F},\hat{G},\hat{\mathsf{R}},\varphi,  \hat{\lambda})$ with the same external output space dimension. Assume that $G = \hat{G} = 0$, and $R_i = \hat{R}_i = 0$ $\forall i\in[1;\hat{\mathsf r}]$, where $0$ represents the zero matrices of appropriate dimensions.  Suppose that $V$, defined in \eqref{eq:quadraticstoragefunction}, is an $\sstft$ from $\hat{\Sigma}$ to $\Sigma$, with respect to $\Sigma_{\theta} = (\Adyn, \Bdyn, \Cdyn, \Ddyn)$, with the interface function given in \eqref{eq:interface}. Then equations \eqref{eq:appabq}, \eqref{eq:c1pHatc1}, \eqref{x12c2px12hHatc2},  \eqref{fpHatf}, and \eqref{epHatEbl1l2} hold.
\end{theorem}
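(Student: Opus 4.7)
The plan is to extract each of the five matrix identities by probing the two defining inequalities of the $\sstft$ (the lower bound \eqref{in:defV} and the dissipation inequality \eqref{ineq:defDiss}) at well-chosen values of $(x,\hat{x},\theta,\hat{u},w,\hat{w})$. The identity \eqref{eq:c1pHatc1} is the easy one and I would dispatch it first: putting $x = P\hat{x}$ and $\theta = 0$ into \eqref{eq:quadraticstoragefunction} gives $V = 0$, so \eqref{in:defV} forces $\alpha(\|C_1 P\hat{x} - \hat{C}_1\hat{x}\|^k) \leq 0$ for every $\hat{x}$, and since $\alpha \in \mathcal{K}_{\infty}$ this yields $C_1 P = \hat{C}_1$.

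For the remaining four identities I would substitute the prescribed interface \eqref{eq:interface} into $\infgen V$ and exploit the standing hypothesis $G = \hat{G} = 0$ and $R_i = \hat{R}_i = 0$ to kill all diffusion and jump contributions. This leaves the drift-only expression
\[
\infgen V = 2(x - P\hat{x})^T\widehat{M}\,\Phi + 2\theta^T\Lambda\bigl[\Adyn\theta + B_1(Ww - \hat{W}\hat{w}) + B_2(C_2 x - H\hat{C}_2\hat{x})\bigr],
\]
where $\Phi$ collects the four residual matrices $(AP + BQ - P\hat{A})$, $(B\tilde{R} - P\hat{B})$, $(BL_1 + E)$ and $(BL_2 + P\hat{E})$ multiplying, respectively, $\hat{x}$, $\hat{u}$, $\varphi(t,Fx)$ and $\varphi(t,\hat{F}\hat{x})$. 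I then test \eqref{ineq:defDiss} on the manifold $\{x = P\hat{x},\ \theta = 0\}$, on which both $(x - P\hat{x})$ and $\theta$ vanish so $\infgen V \equiv 0$, while the right-hand side becomes $\mathsf{c} + z^T X z$ with $z = D_1(Ww - \hat{W}\hat{w}) + D_2(C_2 P - H\hat{C}_2)\hat{x}$. Because $D_2^T X D_2 \preceq 0$, the leading $\|\hat{x}\|^2$-term of $z^T X z$ is non-positive; for the inequality to survive as $\|\hat{x}\| \to \infty$ this quadratic form must vanish identically, which in turn gives \eqref{x12c2px12hHatc2} after using non-degeneracy of $X$ on $\mathrm{im}\,D_2$. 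A second probe $x = P\hat{x} + \tau e$ with $\tau \downarrow 0$ then isolates the $\tau$-linear coefficient of $\infgen V$, namely $2\tau e^T\widehat{M}(AP + BQ - P\hat{A})\hat{x} + O(\tau^2)$; since $V = O(\tau^2)$ and the RHS of \eqref{ineq:defDiss} is $O(\tau^2)$ in this regime, the $\tau$-linear coefficient must vanish for every $e,\hat{x}$, giving \eqref{eq:appabq}. An analogous perturbation argument, now choosing $\hat{x}$ inside the kernels of $FP$ and of $\hat{F}$ in order to freeze one $\varphi$-term while varying the other, forces first $\hat{F} = FP$ (i.e., \eqref{fpHatf}); and once the two $\varphi$-arguments are aligned, the surviving residual $[(BL_1+E) - (BL_2 + P\hat{E})]\varphi(t,FP\hat{x})$ must again vanish, producing \eqref{epHatEbl1l2}.

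The main obstacle will be the treatment of the nonlinearity $\varphi$: it is constrained only by the incremental quadratic inequality, so standard Lipschitz growth estimates are unavailable, and rigorously matching coefficients requires carefully selecting directions of $\hat{x}$ in the kernels of $F$, $FP$ and $\hat{F}$ so that one $\varphi$-term becomes constant while the other varies, together with an implicit non-degeneracy hypothesis that the range of $\varphi(t,\cdot)$ is rich enough for a vanishing of $[(BL_1+E)-(BL_2+P\hat{E})]\varphi$ to imply the matrix identity. A secondary, more routine, difficulty is the bookkeeping: the four residual matrices multiply different combinations of $\hat{x}$ and $\varphi$, and separating them cleanly parallels (in reverse) the algebraic reorganisation already carried out in the chain \eqref{ineq:upperbound} used for the sufficient direction.
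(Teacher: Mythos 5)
Your treatment of \eqref{eq:c1pHatc1} and \eqref{x12c2px12hHatc2} is essentially the paper's own argument (set $x=P\hat x$, $\theta=0$, $w=\hat w=\hat u=0$, use $V=0$ and $D_2^TXD_2\preceq 0$), and your scaling of $\hat x$ to absorb the additive constant $\mathsf c$ in \eqref{ineq:defDiss} is actually more careful than the paper's write-up, which silently drops $\mathsf c$. The genuine gap is in your derivation of \eqref{eq:appabq}, \eqref{fpHatf}, \eqref{epHatEbl1l2}. Your probe $x=P\hat x+\tau e$, $\tau\downarrow 0$ rests on the claim that the right-hand side of \eqref{ineq:defDiss} is $O(\tau^2)$, but Definition \ref{d:stf} permits an arbitrary constant $\mathsf c\in\Real_{\geq 0}$ on that side; the right-hand side is therefore $\mathsf c+O(\tau^2)$ plus a nonpositive $o(1)$ term, and it does \emph{not} vanish as $\tau\to 0$. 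Consequently the $\tau$-linear coefficient $2e^T\widehat M\bigl[(AP+BQ-P\hat A)\hat x+(BL_1+E)\varphi(t,FP\hat x)-(BL_2+P\hat E)\varphi(t,\hat F\hat x)\bigr]$ is only forced to be bounded by $\mathsf c/\tau$, not to be zero. The trick that rescued you in the $C_2$ step (the offending term scales quadratically in $\|\hat x\|$ while $\mathsf c$ is fixed) has no analogue here, because the term you want to kill is exactly first order in $\tau$ and the unbounded-$\hat x$ direction is contaminated by the unknown growth of $\varphi$.

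The paper avoids the infinitesimal inequality altogether for these three identities: it sets $\hat\upsilon\equiv 0$, $\omega\equiv\hat\omega\equiv 0$, argues that the set $\{x=P\hat x,\ \theta=0\}$ is invariant for the coupled deterministic closed loop (this is where $G=\hat G=0$, $R_i=\hat R_i=0$ enter, making the dynamics ODEs and, via the comparison argument $\tfrac{d}{dt}V\leq-\eta(V)$, keeping $V\equiv 0$), and then matches $\diff\xi(t)=P\diff\hat\xi(t)$ termwise, which yields precisely the residual identity above as an exact equality rather than an inequality coefficient. If you want to salvage your route you would need to first justify that $\mathsf c$ may be taken to be zero, at which point you are effectively reproducing the paper's invariance argument. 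On the final step --- splitting the single residual identity into the three matrix equations \eqref{eq:appabq}, \eqref{fpHatf}, \eqref{epHatEbl1l2} by separating the affine term from the two $\varphi$-terms --- you are right that this needs a richness/non-degeneracy argument for $\varphi$; the paper asserts this separation without comment, so your flagging of it is fair, but it does not compensate for the $\mathsf c$ issue, which is the step that actually breaks.
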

\begin{proof}
	Since $V$ is an $\sstft$ from $\hat{\Sigma}$ to $\Sigma$, there exists a \kinf \ function $\alpha$ such that $\norm{C_1x - \hat{C}_1\hat{x}}^2 \leq \alpha^{-1}(V(x,\hat{x},\theta))$ for any $x \in \Real^n$, any $\hat{x} \in \Real^{\hat{n}},$ and any  $\theta \in \Real^{l_{\theta}}$. From \eqref{eq:quadraticstoragefunction}, it follows that $\norm{C_1P\hat{x} - \hat{C}_1\hat{x}}^2 \leq \alpha^{-1}(V(P\hat{x}, \hat{x}, 0)) = 0$ holds for all $\hat{x} \in \Real^{\hat{n}}$ which implies \eqref{eq:c1pHatc1}. 
	Let us assume that $D_2^TXD_2 \neq 0$. To prove \eqref{x12c2px12hHatc2}, we consider the inputs  $w \equiv 0, \hat{w} \equiv 0, \hat{u} \equiv 0$, and choose $x = P\hat{x}$ and $\theta = 0$ in \eqref{ineq:defDiss}. One has:
	\begin{align}
	0 &\leq (C_2P\hat{x} - H\hat{C}_2\hat{x})^TD_2^TXD_2(C_2P\hat{x} - H\hat{C}_2\hat{x}),
	\end{align}
	for all $\hat{x} \in \Real^{\hat{n}}.$ Since $D_2^TXD_2 \preceq 0$, and  $D_2^TXD_2 \neq 0$ by assumption, one obtains $C_2P-  H\hat{C}_2 = 0$, which implies \eqref{x12c2px12hHatc2}.	Consider the input signals $\hat{\upsilon} \equiv 0, \omega \equiv 0, \hat{\omega} \equiv 0$. It can be easily seen that the subspace $\{(x,\hat{x}, \theta)    :  x = P\hat{x}, \theta = 0\}\subseteq \Real^{n} \times \Real^{\hat{n}} \times \Real^{l_{\theta}} $ is invariant \cite{khalil1996nonlinear}, which implies that when $\xi(0) = P\hat{\xi}(0)$ and $\xi_{\theta}(0) = 0$, one has:	
	\begin{align}
	\xi(t) = P\hat{\xi}(t), \quad \xi_{\theta}(t) = 0,\quad \diff\xi(t) = P\diff\hat{\xi}(t),
	\end{align}
	for all $t \in \Real_{\geq 0}$, from which we derive that
	\begin{align}
	&(A P\hat{\xi}(t) + BQ\hat{\xi}(t) + BL_1\varphi(t,F\xi(t)) - BL_2\varphi(t,\hat{F}\hat{\xi}(t)) + E\varphi(t,FP\hat{\xi}(t)))\diff t = (P\hat{A}\hat{\xi}(t) + P\hat{E}\varphi(t,\hat{F}\hat{\xi}(t)))\diff t,
	\end{align}
	for all $t \in \Real_{\geq 0}$, thus implying \eqref{eq:appabq}, \eqref{fpHatf}, and \eqref{epHatEbl1l2}.
\end{proof}

\subsection{Geometric interpretation of different conditions}

In this section, we provide geometric conditions on matrices appearing on the definition of $\hat \Sigma$, of stochastic storage function and its corresponding interface function. The geometric conditions facilitate the construction of the abstraction.
First, we recall the following result from \cite{girard2009hierarchical}, providing necessary and sufficient conditions for the existence of $\hat A$ and $Q$ satisfying \eqref{eq:appabq}.
\begin{lemma}
	Consider matrices $A$, $B$, and $P$. There exist matrices $\hat A$ and $Q$ satisfying \eqref{eq:appabq} if and only if
	\begin{align}
		\mathrm{im} \ AP \subseteq \mathrm{im} \ P + \mathrm{im} \ B \label{set:imapimpimb}.
	\end{align}
	\label{lem:appb}
\end{lemma}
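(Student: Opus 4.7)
The plan is to prove both directions by a column-wise decomposition argument, translating the matrix equation \eqref{eq:appabq} into a statement about the image of $AP$ lying in the sum of two subspaces.

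For the forward direction ($\Rightarrow$), I would assume that matrices $\hat A$ and $Q$ exist with $AP = P\hat A - BQ$. Then, reading this column by column, each column of $AP$ can be written as a column of $P\hat A$ minus a column of $BQ$, where the former lies in $\mathrm{im}\ P$ and the latter lies in $\mathrm{im}\ B$. Hence every column of $AP$ is in $\mathrm{im}\ P + \mathrm{im}\ B$, which gives $\mathrm{im}\ AP \subseteq \mathrm{im}\ P + \mathrm{im}\ B$, i.e. \eqref{set:imapimpimb}.

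For the converse direction ($\Leftarrow$), I would suppose \eqref{set:imapimpimb} holds and construct $\hat A$ and $Q$ explicitly. Let $(AP)_j$ denote the $j$-th column of $AP$. By hypothesis there exist vectors $\alpha_j \in \mathrm{im}\ P$ and $\beta_j \in \mathrm{im}\ B$ with $(AP)_j = \alpha_j + \beta_j$. By definition of the image, we can pick $\hat a_j$ and $q_j$ (of appropriate dimensions) satisfying $\alpha_j = P\hat a_j$ and $\beta_j = -B q_j$. Stacking these column vectors I would define $\hat A := [\hat a_1\ \hat a_2\ \cdots]$ and $Q := [q_1\ q_2\ \cdots]$; then column-by-column one obtains $AP = P\hat A - BQ$, which is exactly \eqref{eq:appabq}.

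The argument is essentially linear-algebraic bookkeeping and there is no real obstacle, since the equivalence between ``$AP = P\hat A - BQ$ is solvable in $(\hat A,Q)$'' and ``each column of $AP$ lies in $\mathrm{im}\ P + \mathrm{im}\ B$'' is a standard consequence of interpreting matrix equations column-wise. The only minor point to be careful about is that the choice of $\hat a_j$ and $q_j$ for a given column is generally non-unique (it depends on the intersection $\mathrm{im}\ P \cap \mathrm{im}\ B$), but any measurable selection works; one does not need uniqueness, only existence, so this is not a genuine obstacle. An alternative, cleaner phrasing would be to apply the standard result that the matrix equation $MX = N$ is solvable in $X$ if and only if $\mathrm{im}\ N \subseteq \mathrm{im}\ M$, applied to the composite matrix $M = [P\ \ {-B}]$ with unknown $X = \begin{bmatrix}\hat A\\ Q\end{bmatrix}$ and $N = AP$.
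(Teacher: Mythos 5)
Your proof is correct. The paper itself gives no proof of this lemma---it is recalled from a cited reference---but your column-wise decomposition (equivalently, the solvability criterion for $\begin{bmatrix} P & -B\end{bmatrix}X = AP$, whose image is $\mathrm{im}\,P + \mathrm{im}\,B$) is exactly the standard argument underlying it. One trivial remark: the aside about a ``measurable selection'' is superfluous, since only finitely many columns are involved and plain existence of some preimage suffices.
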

Similarly, we provide necessary and sufficient conditions for the existence of $\hat{C}_2$ and $\hat{E}$, $L_2$ satisfying \eqref{x12c2px12hHatc2} and \eqref{epHatEbl1l2}, respectively.

\begin{lemma}
	Given $P$ and $C_2$, there exists matrix $\hat{C}_2$ satisfying \eqref{x12c2px12hHatc2} if and only if
	\begin{align}
		\mathrm{im} \ C_2 P \subseteq \mathrm{im} \ H \label{set:imc2pimh}
	\end{align}
	for some matrix $H$ of appropriate dimension.
	\label{lem:c2ph}
\end{lemma}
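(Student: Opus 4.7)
The plan is to treat the identity $C_2 P = H \hat{C}_2$ as a matrix equation of the form $H X = B$ with $B := C_2 P$ and unknown $X := \hat{C}_2$, and apply the standard criterion for solvability of such an equation in terms of column spaces.

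For the forward implication, I would suppose that a matrix $\hat{C}_2$ satisfying \eqref{x12c2px12hHatc2} exists. Then any element $y \in \mathrm{im}\, C_2 P$ can be written as $y = C_2 P v$ for some $v$, and substituting gives $y = H \hat{C}_2 v \in \mathrm{im}\, H$. This establishes $\mathrm{im}\, C_2 P \subseteq \mathrm{im}\, H$.

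For the converse, I would assume $\mathrm{im}\, C_2 P \subseteq \mathrm{im}\, H$ and construct $\hat{C}_2$ column by column. Denote the $j$-th column of $C_2 P$ by $(C_2 P)_j$. By hypothesis, $(C_2 P)_j \in \mathrm{im}\, H$, so there exists a vector $c_j$ with $H c_j = (C_2 P)_j$. Assembling these columns, $\hat{C}_2 := [c_1 \; c_2 \; \cdots \; c_r]$ satisfies $H \hat{C}_2 = C_2 P$ as required.

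There is no real obstacle here; the proof is essentially the textbook characterization of solvability of a linear matrix equation $H X = B$ via column spaces (or equivalently, the fact that the range of $H \mapsto H X$ is $\mathrm{im}\, H$ applied columnwise). The only minor subtlety is to note that the dimensions of $\hat{C}_2$ are determined by the sizes of $H$ (number of columns) and $P$ (number of columns), so that the construction above produces a matrix of the correct shape.
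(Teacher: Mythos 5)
Your proof is correct: the lemma is exactly the standard solvability criterion for the linear matrix equation $H\hat{C}_2 = C_2P$, and your columnwise construction for the converse together with the direct substitution for the forward implication is precisely the argument the paper relies on (it states this lemma without proof, as it does for the analogous image conditions \eqref{set:imapimpimb} and \eqref{set:imbbl3p}). No gaps.
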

\begin{lemma}
	Given $P$, $B$, and $L_1$, there exist matrices $\hat{E}$ and $L_2$  satisfying \eqref{epHatEbl1l2} if and only if
	\begin{align}
		\mathrm{im} \  E \subseteq \mathrm{im} \ B + \mathrm{im} \ P.
		\label{set:imbbl3p}
		\end{align}
		\label{lem:ebbl3p}
\end{lemma}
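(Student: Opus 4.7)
The plan is to prove both implications of the equivalence directly, in the same spirit as Lemmas \ref{lem:appb} and \ref{lem:c2ph}, by a column-wise existence argument.

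For the necessity direction ($\Rightarrow$), I would assume that matrices $\hat{E}$ and $L_2$ satisfying \eqref{epHatEbl1l2} exist. Then for any vector $v$ of appropriate dimension, one has $Ev = P(\hat{E}v) + B((L_2-L_1)v) \in \mathrm{im}\ P + \mathrm{im}\ B$, which immediately gives $\mathrm{im}\ E \subseteq \mathrm{im}\ B + \mathrm{im}\ P$.

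For the sufficiency direction ($\Leftarrow$), I would proceed column by column: writing $E = [e_1 \ \cdots \ e_r]$, the inclusion $\mathrm{im}\ E \subseteq \mathrm{im}\ B + \mathrm{im}\ P$ yields vectors $p_i$ and $b_i$ such that $e_i = P p_i + B b_i$ for each $i$. Assembling $\hat{E} := [p_1 \ \cdots \ p_r]$ and $\tilde{L} := [b_1 \ \cdots \ b_r]$ produces the decomposition $E = P\hat{E} + B\tilde{L}$. Since $L_1$ is specified a priori but $L_2$ is free, I would then simply set $L_2 := \tilde{L} + L_1$, obtaining $E = P\hat{E} + B(L_2 - L_1)$ as required.

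There is no substantive obstacle here; the only subtlety worth flagging is that the presence of the fixed $L_1$ in \eqref{epHatEbl1l2} is immaterial because $L_2$ can absorb it, so the problem reduces to the standard range-inclusion characterization of solvability for the matrix equation $E = P\hat{E} + B\tilde{L}$ in the unknowns $(\hat{E},\tilde{L})$. The proof can therefore be written very concisely, mirroring the structure of the two preceding lemmas.
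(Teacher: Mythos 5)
Your proof is correct and is exactly the standard range-inclusion argument that the paper implicitly relies on (the lemma is stated without proof there, by analogy with the preceding lemmas adapted from \cite{girard2009hierarchical}). The observation that the fixed $L_1$ is immaterial because $L_2$ absorbs it is the only point of substance, and you handle it correctly.
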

Lemmas \ref{lem:appb}, \ref{lem:c2ph}, and \ref{lem:ebbl3p} provide sufficient and necessary conditions on $P$ and $H$, resulting in the construction of matrices $\hat{A}$, $\hat{C}_2$, and $\hat{E}$ and matrices $Q$ and $L_2$ appearing in the interface function \eqref{eq:interface}. 
The next lemma provides a sufficient and necessary condition on the existence of $\hat{D}$ satisfying \eqref{pHatdzHatw}.
\begin{lemma}
	Given $Z$, there exists matrix $\hat{D}$ satisfying \eqref{pHatdzHatw} if and only if
	\begin{align}
	\mathrm{im} \ Z\hat{W} \subseteq \mathrm{im} \ P,
	\label{set:imzwimp} 
	\end{align}
	for some matrix $\hat{W}$ of appropriate dimension.
\end{lemma}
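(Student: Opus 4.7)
The plan is to prove this via the elementary linear-algebra characterization of solvability of matrix equations of the form $PX = Y$: such an equation admits a solution $X$ if and only if every column of $Y$ lies in the column space of $P$, i.e. $\mathrm{im}\ Y \subseteq \mathrm{im}\ P$. Here $Y = Z\hat{W}$ and $X = \hat{D}$, so the lemma is essentially a direct instantiation of this fact, with $\hat{W}$ playing the role of a free design parameter.

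For the \emph{necessity} direction, I would argue as follows. Suppose $\hat{D}$ satisfies $P\hat{D} = Z\hat{W}$. Then for any vector $v$ of appropriate dimension, $Z\hat{W}v = P(\hat{D}v) \in \mathrm{im}\ P$. Taking the union over all such $v$ yields $\mathrm{im}\ Z\hat{W} \subseteq \mathrm{im}\ P$, which is the desired inclusion.

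For the \emph{sufficiency} direction, assume $\mathrm{im}\ Z\hat{W} \subseteq \mathrm{im}\ P$ for the given $\hat{W}$. Let $z_1,\ldots,z_k$ denote the columns of $Z\hat{W}$. By the hypothesis, each $z_i$ lies in $\mathrm{im}\ P$, so there exists $d_i$ with $Pd_i = z_i$ (such $d_i$ can, for example, be produced via the Moore--Penrose pseudoinverse as $d_i = P^{\dagger}z_i$, using the fact that $PP^{\dagger}$ restricted to $\mathrm{im}\ P$ is the identity). Assembling the vectors into the matrix $\hat{D} \defeq [d_1,\ldots,d_k]$ yields $P\hat{D} = [Pd_1,\ldots,Pd_k] = [z_1,\ldots,z_k] = Z\hat{W}$, as required.

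No step in this argument is deep; the only thing to be careful about is to state clearly that $\hat{W}$ is fixed in both directions (so that both $\mathrm{im}\ Z\hat{W}$ and the right-hand side $Z\hat{W}$ in the equation refer to the same matrix), mirroring the treatment of the analogous conditions in Lemmas~\ref{lem:appb}, \ref{lem:c2ph}, and \ref{lem:ebbl3p}. The proof is essentially one paragraph and I would keep it short.
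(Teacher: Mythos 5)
Your proof is correct and is exactly the standard solvability argument for the linear matrix equation $P\hat{D} = Z\hat{W}$ that the lemma instantiates; the paper itself omits the proof as routine, so there is nothing to reconcile. Your remark about keeping $\hat{W}$ fixed across both directions (despite the ``for some $\hat{W}$'' phrasing in the statement) is the right point of care.
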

Although condition \eqref{set:imzwimp} is readily satisfied by choosing $\hat W=0$, one should preferably aim at finding a nonzero $\hat W$ with the highest possible rank
to facilitate later the satisfaction of compositionality condition \eqref{eq:dycinterconnected2}.

%
We summarize the construction of abstraction $\hat{\Sigma}$, stochastic storage function $V$ in \eqref{eq:quadraticstoragefunction}, and its corresponding interface function in \eqref{eq:interface} in Table \ref{table:construction}.
\begin{table}
\caption{\small Construction of $\hat{\Sigma} = (\hat{A},\hat{B}, \hat{C}_1, \hat{C}_2, \hat{D}, \hat{E}, \hat{F}, \hat{G},\hat{\mathsf{R}},\varphi, \hat{\lambda})$ together with the corresponding stochastic storage function $V$ in \eqref{eq:quadraticstoragefunction}, with $\Sigma_{\theta} = (A_{\theta}, B_{\theta}, C_{\theta}, D_{\theta})$, and interface function in \eqref{eq:interface} for a given $\Sigma = (A,B,C_1, C_2, D, E, F, G, \mathsf R, \varphi, \lambda)$. \normalsize}\label{table:construction}
\centering
	\begin{tabular}{ |p{8.3cm}| } 
		\hline
		\begin{enumerate}
			\item Compute matrices $\widehat{M}, K, L_1, X, A_{\theta}, C_{\theta},$ $B_\theta:=[B_1~B_2], D_\theta=[D_1~D_2],  \Lambda$, and $Z$ satisfying \eqref{eq:dzw} and \eqref{ineq:bmi};
			\item Pick an injective $P$ with the lowest rank satisfying \eqref{set:imapimpimb}, \eqref{set:imc2pimh}, \eqref{set:imbbl3p}, and \eqref{set:imzwimp} ;
			\item Choose $\hat{A}$ and $Q$ according to  \eqref{eq:appabq};
			\item Choose $L_2$ and $\hat{E}$ according to \eqref{epHatEbl1l2};
			\item Compute $\hat{F} = FP$;
			\item Compute $\hat{C}_1 = C_1P$;
			\item Choose $\hat{G} = 0$. Choose $\hat{\mathsf{R}} = \{\hat{R}_1,\dots,\hat{R}_{\hat{\mathsf{r}}}\}$ and $\hat{\lambda} = \{\hat{\lambda}_1,\dots,\hat{\lambda}_{\hat{\mathsf{r}}} \}$ according to \eqref{opt:rhatlambda}; 
			\item Choose $\hat{C}_2$ satisfying $H\hat{C}_2 = C_2P$ for some $H$;
			\item Choose $\hat{D}$ satisfying $P\hat{D} = Z\hat{W}$ for some $\hat{W}$ with the highest rank;
			\item Choose $\hat{B}$ freely (e.g. $\hat{B}=I_{\hat n}$ making $\hat \Sigma$ fully actuated);
			\item Compute $\tilde{R}$, appearing in \eqref{eq:interface}, according to \eqref{eq:mintildeR};
		\end{enumerate} \\
		\hline
	\end{tabular}
\end{table}
\section{Examples}
\subsection{Example 1} 
Consider the following system: 
\begin{IEEEeqnarray*}{c}
	\Sigma:\left\{
	\begin{IEEEeqnarraybox}[\relax][c]{rCl}
		\diff \xi(t) &=& (-L\xi(t) + \upsilon(t) + \varPhi(\xi(t)))\diff t + G\diff W_t + R\diff P_t \nonumber,\\
		\zeta(t) &=& C\xi(t),%
	\end{IEEEeqnarraybox}\right.
\end{IEEEeqnarray*}
where the matrices $C \in \Real^{q\times n}$, $L \in \Real^{n\times n}$, $G \in \Real^{n \times 1}$, $R \in \Real^{n \times 1}$, and  the vector valued function $\varPhi : \Real^n \rightarrow \Real^n$ are defined as follows:
\begin{align}
L &= \begin{bmatrix}
n - 1 & -1 & \dots & \dots & -1 \\
-1 & n-1 & -1 & \dots & -1 \\
-1 & -1 & n-1 & \dots & -1 \\
\vdots &  & \ddots & \ddots & \vdots \\
-1 & \dots & \dots & -1 & n-1
\end{bmatrix},  \nonumber \\ C &= \mathsf{diag}(C_{11},\ldots,C_{1N}), 
G = \varpi\vec{1}_n, R = \tau\vec{1}_n, \nonumber \\
\varPhi(\xi) &= [\vec{1}_{n_1}\sin(\vec{1}_{n_1}^T\xi_1);\dots;\vec{1}_{n_N}\sin(\vec{1}^T_{n_N}\xi_N)],
\end{align}
where $\tau$, $\varpi$ $\in \Real_{>0}$, $C_{1i} \in \Real^{q_{1i}\times n_i}$, and $\xi_1,\dots,\xi_N$ are defined as follows: $\xi$ is partitioned as $\xi = [\xi_1;\dots;\xi_N]$ and $v$ as $\upsilon = [\upsilon_1;\dots;\upsilon_N]$, where $\xi_i$ and $\upsilon_i$ are both taking values in $\Real^{n_i}, \forall i \in [1;N]$. Assume the rate of the Poisson process $P_t$ is $\lambda$. 
By introducing $\Sigma_i = (0_{n_i}, I_{n_i}, C_{1i}, I_{n_i}, I_{n_i},  \vec{1}_{n_i}, \vec{1}_{n_i},\varpi \vec{1}_{n_i},\tau \vec{1}_{n_i}, \varPhi_i,\lambda)$ satisfying
\begin{IEEEeqnarray*}{c}
	\Sigma_i:\left\{
	\begin{IEEEeqnarraybox}[\relax][c]{rCl}
		\diff \xi_i(t) &=& (\omega_i(t) + \upsilon_i(t) + \vec{1}_{n_i}\varPhi_i(\vec{1}_{n_i}^T\xi_i))\diff t \nonumber \\&&+ \varpi \vec{1}_{n_i} \diff W_t + \tau\vec{1}_{n_i} \diff P_t\nonumber,\\
	\zeta_{1i}(t)&=& C_{1i}\xi_i(t), \\%
	\zeta_{2i}(t) &=& \xi_i(t),
	\end{IEEEeqnarraybox}\right.
\end{IEEEeqnarray*}
where $\varPhi_i: \Real \rightarrow \Real$ is defined by  $\varPhi_i(x) = \sin(x)$, $\forall i \in [1;N],$ one can verify that $\Sigma = \mathcal{I}(\Sigma_1,\dots,\Sigma_N)$ where the coupling matrix $M$ is given by $M = -L$. 
We consider a deterministic scalar abstraction $\hat{\Sigma}_i = (0,1,C_{1i}\vec{1}_{n_i}, 1,1,1,1,0,0,\varPhi_i,0)$ satisfying
\begin{align}
\hat{\Sigma}_i : 
\begin{cases}
\diff \hat{\xi}_i(t) = (\hat{\omega}_i(t)+ \hat{\upsilon}_i(t) + \varPhi_i(\hat{\xi}_i))\diff t \\
\hat{\zeta}_{1i}(t) = C_{1i}\vec{1}_{n_i}\hat{\xi}_i(t) \nonumber \\
\hat{\zeta}_{2i}(t) = \hat{\xi}_i(t),
\end{cases}
\end{align}
for any $i \in [1;N]$. The function  $V(x_i, \hat x_i, \theta_i) = (x-\vec{1}_{n_i}\hat{x}_i)^T(x-\vec{1}_{n_i}\hat{x}_i)$ (i.e. $\widehat{M}_i = I_{n_i}$, $P_i = \vec{1}_{n_i}, \Lambda = 0$) is an $\sstft$ from $\hat{\Sigma}_i$ to $\Sigma_i$, with respect to $\Sigma_{\theta_i} = (A_{\theta_i}, B_{\theta_i}, C_{\theta_i}, D_{\theta_i})$, $\forall i \in [1;N]$, with the following parameters 
\begin{align}
&K_i = -\chi I_{n_i}, \hat{\kappa}_i = 2\chi - 2\lambda\tau - \varpi^2 - \lambda\tau^2, Z_i = I_{n_i}, \nonumber \\& W_i = I_{n_i}, Q_i = 0_{n_i}, H_i = \hat{W}_i = \vec{1}_{n_i}, L_{1i} = -\vec{1}_{n_i}, A_{\theta_i} = 0,\nonumber \\&B_{\theta_i} = 0, C_{\theta_i} = 0, D_{\theta_i} = I_{2n_i}, \bar{\kappa} = 0, X_i = \begin{bmatrix}
	  0_{n_i} & I_{n_i} \\ I_{n_i} & 0_{n_i}
	\end{bmatrix}
\end{align}
for some $\chi > \lambda\tau + \frac{\varpi^2}{2} + \frac{\lambda\tau^2}{2}$, and with $\alpha_i(r) = \frac{1}{\lambda_{\max}(C_{1i}^TC_{1i})}r$, $\eta_i(r) = (2\chi - 2\lambda\tau -\varpi^2 - \lambda\tau^2)r, \psi_{i\mathsf{ext}}(r) = 0$, $\forall r \in \Real_{\geq 0},$ and $\mathsf{c}_i = \tau^2 + \varpi^2$. Inputs $u_i \in \Real^{n_i}$ is given via the interface function in (\ref{eq:interface}) as (i.e. $\tilde{R}_i = \vec{1}_{n_i}, L_{2i} = \vec{1}_{n_i})$
\begin{align}
u_i &= -\chi(x_i - \vec{1}_{n_i}\hat{x}_i) + \vec{1}_{n_i}\hat{u}_i - \vec{1}_{n_i}\varPhi_i(\vec{1}_{n_i}^Tx_i) + \vec{1}_{n_i}\varPhi_i(\hat{x}_i).
\label{eq:int1}
\end{align}
\normalsize
By selecting $\mu_1 = \ldots = \mu_N = 1$, the function $V(x,\hat{x},\theta) = \sum^N_{i=1}\mu_i V_i(x_i,\hat{x}_i,\theta_i)$ is an $\ssft$ function from $\hat{\Sigma}$ to $\Sigma$, where $\hat{\Sigma}$ is the interconnection of the abstract subsystems $\hat{\Sigma} = \mathcal{I}(\hat{\Sigma}_1,\dots,\hat{\Sigma}_N)$ with a coupling matrix $\hat{M}$, satisfying condition \eqref{eq:dycinterconnected2} as the following
\begin{align}
\label{eq:example_Mhat}
-L\mathsf{diag}(\vec{1}_{n_1},\ldots,\vec{1}_{n_N})
=
\mathsf{diag}(\vec{1}_{n_1},\ldots,\vec{1}_{n_N}) \hat{M} .
\end{align}
A matrix $\hat{M}$ exists satisfying \eqref{eq:example_Mhat} if there exist $N$ {\it equitable partitions} of the graph described by the Laplacian matrix $L$, which is always true here because $L$ represents a fully connected graphs, as explained in \cite{godsil2013algebraic}.
It can be easily seen that condition \eqref{eq:dyninterconnected1} reduces to 
\begin{align}
\begin{bmatrix} 
-L \\
I_n 
\end{bmatrix}^T 
\begin{bmatrix}
0 & I_n \\
I_n & 0
\end{bmatrix} \begin{bmatrix} 
-L \\
I_n 
\end{bmatrix} = -(L + L^T) \preceq 0,
\end{align}
which always holds since $L = L^T \succeq 0$, which is always true for Laplacian matrices of undirected graphs \cite{godsil2013algebraic}.
	\begin{figure}[t]
			\begin{subfigure}[t]{0.5\textwidth}
			\centering
			\includegraphics[scale=0.3]{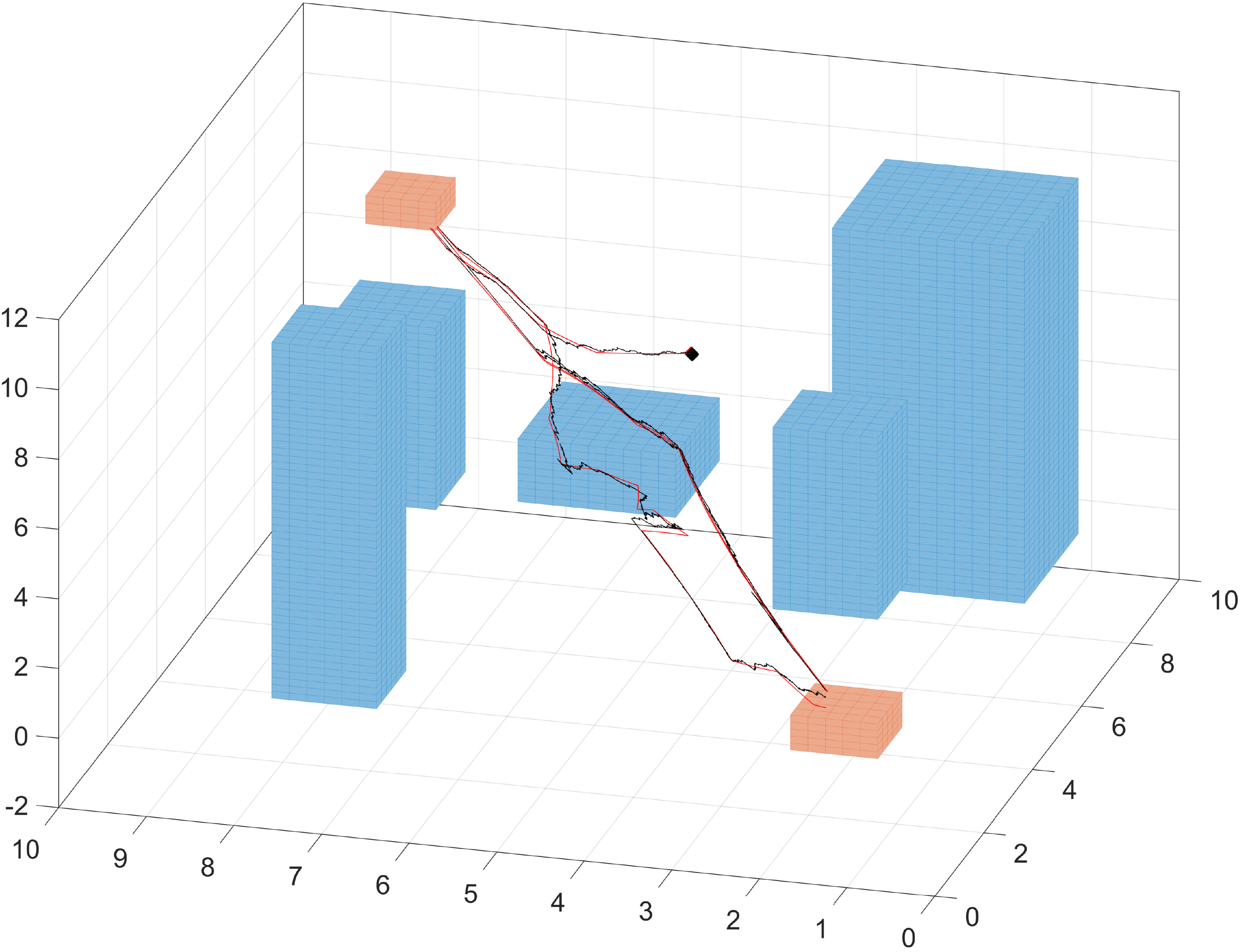}
		\end{subfigure}

		\caption{\small The figure shows the output trajectories of the abstract (red) and one realization of the concrete (black) interconnected systems. The initial point of the trajectories is represented by the diamond.
			\normalsize}
		\label{fig:fig1}
	\end{figure}
\subsubsection{Controller synthesis}
Now, we synthesize a controller for the abstract interconnected system $\hat{\Sigma} = \mathcal{I}(\hat{\Sigma}_1,\dots,\hat{\Sigma}_N)$ to enforce a specification, and then refine the designed controller to the one for the concrete interconnected system. We fix $n$ = 9, $N$ = 3, $\tau = 0.2, \varpi = 0.4, \lambda = 1, \chi = 10$ and 
\begin{align}
C = \begin{bmatrix}
1 & 0 & 0 & 0 & 0 & 0 & 0 & 0 & 0\\
0 & 0 & 0 & 0 & 1 & 0 & 0 & 0 & 0\\
0 & 0 & 0 & 0 & 0 & 0 & 0 & 0 & 1
\end{bmatrix},
\end{align}
$$C_{11} = \begin{bmatrix}1 & 0 & 0 \end{bmatrix}, C_{12} =\begin{bmatrix}0 & 1 & 0 \end{bmatrix}, C_{13} = \begin{bmatrix}0 & 0 & 1 \end{bmatrix}.$$
 We synthesize a controller using toolbox \texttt{SCOTS} \cite{rungger2016scots} to enforce the following linear temporal logic specification \cite{katoen08} over the outputs of $\hat {\Sigma}$:
\begin{align}
\Psi = \square S \wedge \left(\bigwedge\limits_{i = 1}^5 \square (\lnot O_i)\right) \wedge \square \lozenge T_1 \wedge \square \lozenge T_2,
\end{align}

which can be interpreted as follows: the output trajectory of the closed loop system evolves inside the set $S$, avoids obstacles $O_i, i \in [1;5],$ indicated with blue boxes in Figure \ref{fig:fig1}, and visits $T_i, i \in [1;2]$ infinitely often, indicated with red boxes in Figure \ref{fig:fig1}.  We use \eqref{eq:int1}
to generate the corresponding input enforcing this specification over the original system $\Sigma$.
\subsection{Example 2}
In this part, we provide compositional abstractions of a network of subsystems wherein the joint dissipativity property of each concrete subsystem and its abstraction is only concluded with respect to a linear control system $\Sigma_{\theta}$ rather than a static map. 
Consider an interconnection of $N$ control subsystems $\Sigma_i$, where each $\Sigma_i$ is given by
\small
\begin{align}
\Sigma_i : \begin{cases}
\diff \xi_i(t) = (A_i\xi_i(t) + B_i\upsilon_i(t) + D_i\omega_i(t))\diff t, \\ 
\zeta_{1i}(t) = C_{1i}\xi_i(t), \\
\zeta_{2i}(t) = \xi_i(t),
\end{cases}
\end{align}
\normalsize
where 
\begin{align}
A_i = \begin{bmatrix}
0_{n_i} & I_{n_i}\\
-I_{n_i} & -0.5I_{n_i}
\end{bmatrix},
B_i = D_i = \begin{bmatrix}
0_{n_i} \\
I_{n_i}
\end{bmatrix},
C_{1i} = \begin{bmatrix}
0_{n_i} \\ e_{n_i}
\end{bmatrix}^T,
\end{align}
 vector $e_{n_i}$ represents a column vector whose first element is 1 and remaining elements are zero.  For the sake of simulation we choose $N$ = 3, $n_i = 10$, $\forall i \in [1;N]$. 
 We consider the following abstract system $\hat{\Sigma}_i$, 
\small
\begin{IEEEeqnarray}{c}
	\hat{\Sigma}_i: \left\{ \begin{IEEEeqnarraybox}[\relax][c]{rCl}
	\diff \hat{\xi}_i(t) &=& \left(\begin{bmatrix}
		0 & 1 \\
		-1 & -0.5
	\end{bmatrix}\hat{\xi}_i(t) + 
	\begin{bmatrix}
		0 \\
		1
	\end{bmatrix}\hat{\upsilon}_i(t) + 
	\begin{bmatrix}
		0 \\
		1
	\end{bmatrix}\hat{\omega}_i(t) \right) \diff t, \nonumber \\
	\hat{\zeta}_{1i}(t) &=& \begin{bmatrix} 0 & 1 \end{bmatrix}\hat{\xi}_i(t), \nonumber \\
	\hat{\zeta}_{2i}(t) &=& \hat{\xi}_i(t).
	\end{IEEEeqnarraybox}\right.
\end{IEEEeqnarray}
\normalsize
We restrict $K_i$ for each $i \in [1;N]$ appearing in \eqref{eq:interface} such that the last $n_i$ columns are identically zero. This restriction can appear in practice when for example only some state variables are available to be measured. With this restriction on the structure of $K_i$, one cannot find a storage function with $C_{\theta_i}  = 0$ in this example. It can be shown that the function $$V_i(x_i,\hat{x}_i,\theta_i) = (x_i-P\hat{x}_i)^T\widehat{M}(x_i-P\hat{x}_i) + \theta_i^T\Lambda\theta_i$$ is an $\sstft$ from $\hat{\Sigma}_i$ to $\Sigma_i$, with respect to $\Sigma_{\theta_i} = (A_{\theta_i}, B_{\theta_i}, C_{\theta_i}, D_{\theta_i})$, $\forall i \in [1;N]$, with the following parameters 
\small
\begin{align}
&\widehat{M}_i = \begin{bmatrix}
2I_{n_i} & I_{n_i} \\
I_{n_i} & I_{n_i}
\end{bmatrix}, 
P_i = \begin{bmatrix}
\vec{1}_{n_i} & \vec{0}	_{n_i} \\
\vec{0}_{n_i} & \vec{1}_{n_i}
\end{bmatrix}, K_i = \begin{bmatrix}
-0.5 I_{n_i} & 0_{n_i}
\end{bmatrix},\nonumber \\&\hat{\kappa}_i = 0.1, W_i = I_{n_i}, Q_i = 0,  
H_i = \hat{W}_i = \vec{1}_{n_i}, L_{1i} = 0,\Lambda = I_{2n_i}, \\&A_{\theta_i} = -5I_{2n_i}, B_{\theta_i} = \begin{bmatrix}
0_{n_i} & -4.14I_{n_i} \\
0_{n_i} & 11.51I_{n_i}
\end{bmatrix}, C_{\theta_i} = 0.1I_{2n_i}, \\&D_{\theta_i} =\begin{bmatrix}
0_{n_i} & I_{n_i} \\
0_{n_i} & I_{n_i}
\end{bmatrix},  X_i = \begin{bmatrix}
	9.47785I_{n_i} &  -7.4055I_{n_i} \\  -7.4055I_{n_i} & 1.6526I_{n_i}
	\end{bmatrix}
	,\bar{\kappa}_i = 1,
\label{ex3param}
\end{align}
\normalsize
with $\alpha_i(r) = \frac{\lambda_{\min} (\widehat{M}_i)}{\lambda_{\max}(C_{1i}^TC_{1i})}r$, $\eta_i(r) = 0.1r$, $\psi_{i\mathsf{ext}} = 0,  \forall r \in \Real_{\geq 0}$, and $c_i =0$.  
Functions $u_i \in \Real^{n_i}$ are given via the interface function:
\begin{align}
u_i = -K_i(x_i - P_i\hat{x}_i) + \vec{1}_{n_i}\hat{u}_i,
\end{align}
(i.e. $\tilde{R}_i = \vec{1}_{n_i}, L_{2i} = 0$). 
With the interconnection matrix $M$ given by \small \begin{align}
M &= \begin{bmatrix}
-2 & 1 & 0 & 0 & \dots & 1 \\
1 & -2 & 1 & 0 & \dots & 0 \\
0 & 1 & -2 & 1 & \dots & 0 \\
\vdots & & & \ddots & & \\ 
& & & & \ddots & \\
1 & 0 & 0 & \dots & 1 & -2 \\
\end{bmatrix}.
\end{align} 
\normalsize
and by selecting $\mu_1 = \dots = \mu_N = 1$, it can be verified that the function $V = \sum\limits_{i=1}^N\mu_iV_i(x_i,\hat{x}_i,\theta_i) + \theta^T\theta$, where $\theta = [\theta_i;\dots;\theta_N]$, is an $\ssft$ from $\hat{\Sigma}$ to $\Sigma$, where $\hat{\Sigma}$ is the interconnection of the abstract subsystems $\hat{\Sigma} = \mathcal{I}(\hat{\Sigma}_1,\dots,\hat{\Sigma}_N)$ with the coupling matrix $\hat{M}$ given by
\begin{align}
\hat{M} = \begin{bmatrix}
-2 & 1 & 1 \\
1 & -2 & 1\\
1 & 1 & -2
\end{bmatrix},
\end{align}
satisfying conditions \eqref{eq:dyninterconnected1} and \eqref{eq:dycinterconnected2}. In the simulation, the input signal to the abstract system is chosen arbitrarily as $\hat{\upsilon}(t) = [\sin(t);0.1\mathsf{e}^{-t};-t].$ Figure \ref{error} shows the evolution of the absolute value of the error between the output trajectories of the concrete interconnected system and its abstraction. One can readily verify that the error is always bounded by the computed error bound in Theorem \ref{theorem1}. 
	\begin{figure}[t!]
		\centering
		\centering
		\includegraphics[scale=0.4]{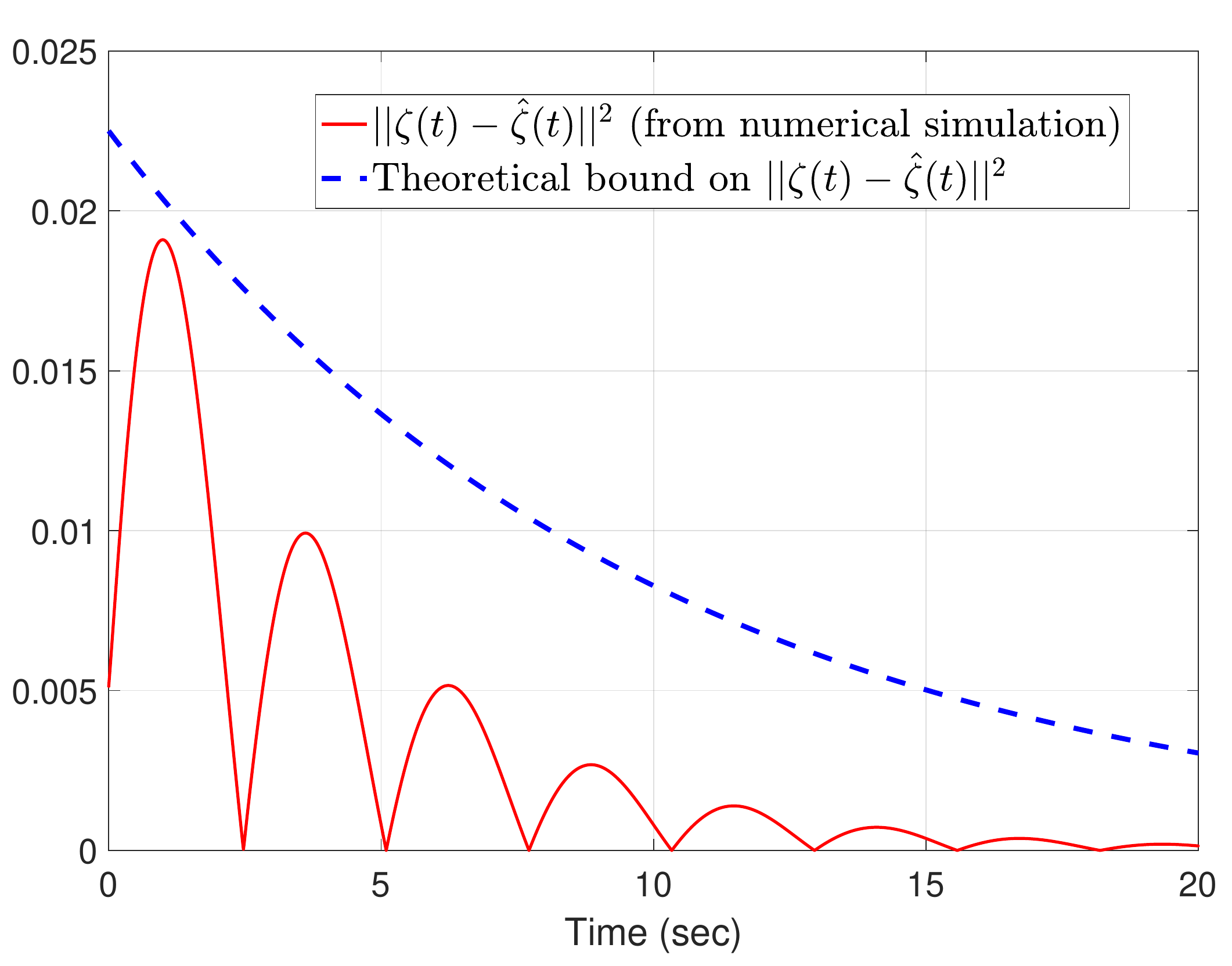}
	\caption{\small The evolution of ${\norm{\zeta(t) - \hat{\zeta}(t)}^2}$,  where $\zeta(t) = [\zeta_{11}(t);\dots;\zeta_{1N}(t)]$, and $\hat{\zeta}(t) = [\hat{\zeta}_{11}(t);\dots;\hat{\zeta}_{1N}(t)]$, and the theoretical upper bound obtained for this example according to \eqref{eq:bound_output}.
		 \normalsize}. \label{error}
\end{figure}
\bibliographystyle{IEEEtran}
\bibliography{bib}
\end{document}